\documentclass[a4paper,USenglish,cleveref, autoref, thm-restate]{lipics-v2021}


\hideLIPIcs  


\bibliographystyle{plainurl}


\RequirePackage{xspace,xcolor}
\usepackage{xifthen}
\RequirePackage{amsthm}
\RequirePackage{amsmath}
\RequirePackage{amssymb}
\usepackage{algorithm}
\usepackage{mathtools}
\usepackage[noend]{algpseudocode}
\usepackage{algorithmicx}
\algrenewcommand\algorithmicrequire{\textbf{Input:}}
\algrenewcommand\algorithmicensure{\textbf{Output:}} 

\algnewcommand{\IfThenElse}[3]{
  \State \algorithmicif\ #1\ \algorithmicthen\ #2\ \algorithmicelse\ #3}

\usepackage{booktabs}
\usepackage{bussproofs}
\newcommand{\added}[1]{\textcolor{black}{#1}}
\newcommand{\Math}[1]{\ensuremath{#1}\xspace}
\newcommand{\ProofSystemsFont}[1]{\mathsf{#1}}
\newcommand{\ComplexityClassFont}[1]{\mathsf{#1}}
\newcommand{\DefineProofSystem}[2]{\expandafter\def\csname#1\endcsname{\Math{\ComplexityClassFont{#2}}}}

\newcommand{\rqis}{\texttt{EQuIPS}\xspace}
\newcommand{\findMove}{\rqis\xspace}
\newcommand{\findMoveNoX}{\texttt{EQuIPS}}
\newcommand{\yasol}{\texttt{Yasol}\xspace}

\newcommand{\cadical}{\texttt{CaDiCaL}\xspace}
\newcommand{\gurobi}{\texttt{GUROBI}\xspace}
\newcommand{\cplex}{\texttt{CPLEX}\xspace}
\newcommand{\MCNBaggio}{\texttt{MCN}$_{{CR}}$\xspace}
\newcommand{\QIP}{QIP}
\newcommand{\QIPPU}{QIP$_{\textnormal{PU}}$}
\newcommand{\depqbf}{\texttt{DepQBF}\xspace}
\newcommand{\zthree}{\texttt{z3}\xspace}
\newcommand{\extract}{\texttt{extract}\xspace}
\newcommand{\wins}{\texttt{wins1}\xspace} 

\newcommand{\qdimacs}{\texttt{QDIMACS}\xspace}
\newcommand{\qcir}{\texttt{QCIR}\xspace}
\newcommand{\refine}{\texttt{refine}\xspace}
\newcommand{\rqs}{\texttt{RAReQS}\xspace}
\newcommand{\qfun}{\texttt{QFUN}\xspace}
\newcommand{\qfuncms}{\texttt{QFUN}$_{\textnormal{cms}}$\xspace}
\newcommand{\qfunmini}{\texttt{QFUN}$_{\textnormal{cadi}}$\xspace}
\newcommand{\cms}{\texttt{CryptoMiniSat}\xspace}

\newcommand{\D}{\ensuremath \mathcal{D}}
\newcommand{\Q}{\ensuremath \mathcal{Q}}
\newcommand{\U}{\ensuremath \mathcal{U}}
\newcommand{\yvec}{\ensuremath \pmb{y}}
\newcommand{\cvec}{\ensuremath \pmb{c}}
\newcommand{\xvec}{\ensuremath \pmb{x}}
\newcommand{\Xvec}{\ensuremath \pmb{X}}
\newcommand{\Yvec}{\ensuremath \pmb{Y}}
\newcommand{\Zvec}{\ensuremath \pmb{Z}}

\newcommand{\tauvec}{\ensuremath \pmb{\tau}}
\newcommand{\muvec}{\ensuremath \pmb{\mu}}
\newcommand{\betavec}{\ensuremath \pmb{\beta}}

\newcommand{\bvec}{\ensuremath \pmb{b}}
\newcommand{\Lvec}{\ensuremath \pmb{L}}
\newcommand{\rvec}{\ensuremath \pmb{r}}
\newcommand{\y}{\ensuremath \pmb{y}}

\newlength{\breite}
\setlength{\breite}{\textwidth}
\addtolength{\breite}{-9.53748pt}

\usepackage{tikz}
\tikzstyle{calcn}=[rectangle%
,rounded corners=1mm%
,thick%
,draw=black,%
,top color=white,bottom color=black!20,%
minimum height=.5cm,minimum width=.5cm,inner sep=1pt%
]
\tikzstyle{qcalcn}=[rectangle%
,rounded corners=2mm%
,thick%
,draw=black!90,%
,top color=white,bottom color=purple!30,%
minimum height=.5cm,minimum width=.5cm,inner sep=1pt%
]


\newcommand{\Red}{\Math{\boldsymbol{\forall}\kern .04ex\ProofSystemsFont{red}}}
\newcommand{\red}{\Math{\,\raisebox{.2ex}{$\scriptstyle+$}\,\Red}}

\newcommand{\Frege}[1][]{\Math{\ifthenelse{\isempty{#1}}{\ProofSystemsFont{Frege}}{#1\text{-}\ProofSystemsFont{Frege}}}}
\DefineProofSystem{eFrege}{e\Frege[]}
\newcommand{\eFregeRed}[1][]{\eFrege\!\!\red}

\newcommand{\ecalculus}{$\forall$\textsf{Exp+Res}\xspace}

\title{An Expansion-Based Approach for Quantified Integer Programming}

\author{Michael Hartisch}{University of Passau, Germany \and FAU Erlangen-Nürnberg, Germany}{michael.hartisch@uni-passau.de}{https://orcid.org/0000-0001-6304-4973}{Partially funded by Deutsche Forschungsgemeinschaft (DFG, German Research
Foundation) - 534441421}

\author{Leroy Chew}{TU Wien, Austria}{lchew@ac.tuwien.ac.at}{https://orcid.org/0000-0003-0226-2832}{funded by Austrian Science Fund FWF Project ESP197.}

\authorrunning{M. Hartisch and L. Chew} 

\Copyright{Michael Hartisch and Leroy Chew} 

\ccsdesc[500]{Mathematics of computing~Discrete mathematics}
\ccsdesc[500]{Mathematics of computing~Solvers}
\ccsdesc[500]{Theory of computation~Discrete optimization}
\ccsdesc[500]{Theory of computation~Constraint and logic programming}
\ccsdesc[500]{Theory of computation~Algorithm design techniques}
\ccsdesc[500]{Theory of computation~Abstraction}
\ccsdesc[500]{Applied computing~Operations research}


\keywords{Quantified Integer Programming, Quantified Constraint Satisfaction, Robust Discrete Optimization, Expansion, CEGAR} 

\category{} 

\relatedversion{} 



\acknowledgements{We thank the three anonymous reviewers for their helpful comments. Furthermore, we acknowledge that this research was partially funded by Deutsche Forschungsgemeinschaft (DFG, German Research
Foundation) - 534441421 and Austrian Science Fund FWF Project ESP197.}

\nolinenumbers 


\begin{document}

\maketitle

\begin{abstract}
Quantified Integer Programming (QIP) bridges multiple domains by extending Quantified Boolean Formulas (QBF) to incorporate general integer variables and linear constraints while also generalizing Integer Programming through variable quantification. As a special case of Quantified Constraint Satisfaction Problems (QCSP), QIP provides a versatile framework for addressing complex decision-making scenarios. Additionally, the inclusion of a linear objective function enables QIP to effectively model multistage robust discrete linear optimization problems, making it a powerful tool for tackling uncertainty in optimization.

While two primary solution paradigms exist for QBF---search-based and expansion-based approaches---only search-based methods have been explored for QIP and QCSP. We introduce an expansion-based approach for QIP using Counterexample-Guided Abstraction Refinement (CEGAR), adapting techniques from QBF. We extend this methodology to tackle multistage robust discrete optimization problems with linear constraints and further embed it in an optimization framework, enhancing its applicability. Our experimental results highlight the advantages of this approach, demonstrating superior performance over existing search-based solvers for QIP in specific instances. Furthermore, the ability to model problems using linear constraints enables notable performance gains over state-of-the-art expansion-based solvers for QBF.
\end{abstract}

\section{Introduction}

\subsection{Motivation}
Solving Quantified Boolean Formulas (QBF) typically relies on two complementary approaches: Quantified Conflict-Driven Clause Learning (QCDCL) and expansion-based solving. QCDCL extends classical SAT solving techniques by incorporating learning and backjumping in a search-based framework\added{---explicitly traversing the assignment tree respecting the quantifier order---}making it particularly effective for QBF instances with deep quantifier alternations. In contrast, expansion-based solving employs a controlled form of Shannon expansion \added{duplicating the formula to eliminate quantifiers, at the cost of a larger formula with more variables}. While expansion methods excel on formulas with few quantifier alternations, they struggle as the number of alternations increases. QCDCL solvers exhibit the opposite behavior: they handle deeply nested quantifications well but can be inefficient for problems where expansion-based approaches are advantageous.

Quantified Integer Programming generalizes QBF by introducing integer variables and linear constraints, placing it within the broader domain of Quantified Constraint Satisfaction Problems (QCSP). Existing QIP solvers follow a QCDCL-inspired approach. 
 However, no expansion-based QIP solver exists, leaving a gap in the landscape of QIP solution methods. In this work, we address this gap by introducing and investigating a new expansion-based QIP solver, enabling efficient resolution of problems where search-based methods struggle.

Expansion-based methods are refutationally complete: given that QIP requires bounded variable domains, a full expansion reduces satisfiability to a finite Integer linear Programming (IP) problem. However, this naive approach is computationally prohibitive, as the expansion tree grows exponentially with the number of variables. In practice, proving unsatisfiability does not require full expansion---an unsatisfiable core often involves only a fraction of the search space. Inspired by techniques from QBF solving, such as those employed in \rqs and \qfun, we leverage Counterexample Guided Abstraction Refinement (CEGAR) to construct a more targeted expansion. Our method iteratively refines the search space by alternating between satisfiability and unsatisfiability checks, incorporating counterexamples (countermoves) to guide the exploration process efficiently.

By bridging the gap between expansion-based solving and QIP, our approach broadens the applicability of QIP solvers and introduces new avenues for tackling complex quantified constraint problems.

\subsection{Related Work}
The study of quantified integer variables dates back at least to the 1990s \cite{gerber1995parametric}, with the term QIP being coined in \cite{subramani2004analyzing}. Early research focused on the complexity of QIP  \cite{subramani2004analyzing,chistikov2017complexity,nguyen2020computational} and treated QIP as a satisfiability problem until optimization aspects were incorporated \cite{lorenz2015solving}.

Building on these foundations, a search-based approach for solving QIP was proposed \cite{ederer2017yasol,hartisch2022general}, followed by several algorithmic enhancements, including expansion-related  relaxation techniques  \cite{hartisch2021adaptive} and pruning heuristics  \cite{hartisch2020novel}.
Extensions enabled restricting universal variable assignments, confining them beyond their default bounds \cite{hartisch2016quantified,hartisch2019mastering}, similar to robust optimization under polyhedral uncertainty \cite{goerigk2024introduction}, budgeted uncertainty \cite{bertsimas2003robust,goerigk2020min}, or decision-dependent uncertainty \cite{poss2014robust,omer2024combinatorial}. Furthermore, the link to multistage robust discrete optimization has been established, demonstrating practical applicability \cite{goerigk2021multistage}.  

In multistage robust discrete optimization, approaches such as scenario generation \cite{goerigk2024introduction} and row-and-column generation (e.g. \cite{baggio2021multilevel}) iteratively refine relaxed formulations, intrinsically incorporating optimization. Unlike these, in order to optimize, we repeatedly solve QIP feasibility problems in a binary search environment to determine the optimal objective value. For general multistage optimization approaches, most available methods rely on approximation techniques, typically using either sampling methods \cite{bertsimas2016multistage,maggioni2025sampling} or decision rules \cite{vayanos2012constraint,postek2016multistage,bertsimas2018binary}, which restrict the solution space to strategies with a predefined (often linear) structure. 

Our approach builds upon the expansion-based methods developed for QBF \cite{DBLP:conf/sat/JanotaKMC12,JM15,janota2018towards},  contrasting with traditional search-based approaches \cite{DBLP:conf/iccad/ZhangM02,lonsing2017depqbf}, though efforts have been made to combine both ideas \cite{bjorner2015conflicts}. Theoretical~\cite{BCCM18} and practical~\cite{LE18alt} results suggest that expansion-based solving is advantageous when there are few quantifier alternations and several limitations of search-based QBF methods have been documented \cite{BCJ19,BB21,BB23,CM24}.

In QCSP, numerous search-based algorithms exist \cite{mamoulis2004algorithms,gent2008solving,verger2008guiding,nightingale2009non,chen2014beyond}, with only a few approaches employing ideas related to expansion---one repair-based \cite{stergiou2005repair} and one relaxation-based \cite{ferguson2006relaxations} approach. Early attempts to also incorporate optimization date back to the early 2000s \cite{chen2004optimization}. These efforts primarily relied on search-based methods \cite{benedetti2008quantified} and were later extended to more general problem formulations \cite{matsui2010quantified}. Similar developments occurred in QBF, where optimization versions have also been explored \cite{ignatiev2016quantified}.

Finally, several extensions of satisfiability modulo theories (SMT) have been developed to address quantified reasoning \cite{barrett2018satisfiability, bjorner2015playing, reynolds2017solving, barbosa2019extending}. These techniques differ from our QIP-specific approach, which leverages the linear structures for greater efficiency. 
A recent approach in \cite{Thuillier_Siegel_Pauleve_2024} seems related, given the very similar title, but differs in scope, focusing on semi-infinite problems---optimization problems with finitely many (existentially quantified) variables and infinitely many (quantified) constraints.

\section{Preliminaries}\label{sec:prelim}
\subsection{Quantified Integer Programming}
A Quantified Integer Program is a natural extension of an integer linear program in which each variable is subject to either existential or universal quantification. Consider \( n \) integer variables \( x_1, x_2, \ldots, x_n \) arranged in order such that if \( i < i' \), variable \( x_i \), is said to lie to the left of \( x_{i'} \). Each variable \( x_i \), $i\in[n]$\footnote{We use $[n]=\{1,\ldots,n\}$ to denote index sets.}, takes values in a bounded domain $
D_i = \{ x_i \in \mathbb{Z} \mid l_i \leq x_i \leq u_i\}$, where $l_i$ and $u_i$ are the lower and upper integer bounds, respectively. In addition, every variable $x_i$ is associated with a quantifier $Q_i \in \{\exists, \forall\}$. The \textit{quantification level} 
 of a variable is defined as the number of alternations of quantifiers to its left plus one. If there are $k\leq n$ such levels, then all variables sharing the same quantification level are grouped together. For each such level $j$, the common quantifier $\Q_j \in \{\exists, \forall\}$ applies to the vector of variables $\Xvec_j$, which collectively range over the domain $\D_j$ (the Cartesian product of their individual domains). We sometimes omit the level index and simply write $\D$, which implies that the variables must remain integral and adhere to their prescribed bounds.
 
  A \textit{QIP feasibility problem} can be written in the compact form $\Q_1 \Xvec_1 \in \D_1 \; \Q_2 \Xvec_2 \in \D_2 \; \cdots \; \Q_k \Xvec_k \in \D_k : A^\exists \xvec \leq \bvec^\exists$, where the \textit{existential constraint system} is given by  matrix $A^\exists \in \mathbb{Q}^{m \times n}$ and right-hand side vector $\bvec^\exists \in \mathbb{Q}^{m}$, with $\xvec=(\Xvec_1,\ldots,\Xvec_k)$. Note, that throughout the paper we use bold letters, to indicate vectors. We sometimes write $\Q \Xvec \in \D: \Phi$, decomposing the problem into the problem of finding an assignment for the first level variables $\Xvec$ and the remaining QIP $\Phi$ that either starts  with quantifier $\bar{Q}$ or only consists of a constraint system. 
For QIP $\Phi$ and variables $\Xvec$ of  quantification level $j\in[k]$, we write $\Phi[\tauvec]$ to denote the modified QIP obtained by removing $\Xvec$ from the quantification sequence and assigning $\Xvec$ to $\tauvec \in \mathcal{D}_j$ in the constraint system of $\Phi$ leading to a change in the right-hand side vector. Then, the new constraint system is described by $A^\exists_{(-\Xvec)}\xvec_{(-\Xvec)} \leq \bvec^\exists_{\Xvec=\tauvec}$, where $A^\exists_{(-\Xvec)}$ contains all columns as $A^\exists$ without the columns corresponding to $\Xvec$, $\xvec_{(-\Xvec)}$ corresponds to the variable vector $\xvec$ without variables $\Xvec$, and $\bvec^\exists_{\Xvec=\tauvec}=\bvec^\exists -A^\exists_{(\Xvec)}\tauvec$, where $A^\exists_{(\Xvec)}$ is the matrix comprising of the columns of $A^\exists$ corresponding to $\Xvec$.

\subsection{QIP Game Semantics}
We can think of a QIP as a game between the \textit{universal} ($\forall$) and \textit{existential} ($\exists$) player, where in move $j$, player $\Q_j$ assigns variables $\Xvec_j$ of level $j$ with values from domain $\D_j$. The existential player wins, if the existential constraint system is satisfied after all variables have been assigned. The universal player wins, if at least one constraint is violated. More formally, a \textit{play} is an assignment of all variables $\Xvec_1,\ldots,\Xvec_k$. A strategy for the assignment of variables $\Xvec_j$ of level $j$, is a function $s_j:\D_1\times\cdots\times \D_{j-1}\rightarrow \D_j$. A strategy $S=(s_i)_{i\in[k],\Q_i=Q}$ for player $Q$ consists of a strategy for all variables associated to her. For a given QIP, $S$ is a \textit{winning strategy} for the existential player, if
$$
\forall \Xvec_i \in \D_i, i\in [k], Q_i=\forall :\ A^\exists \xvec \leq \bvec^\exists \bigwedge_{\substack{i\in[k]:\\ \Q_i=\exists}} \Xvec_i=s_i(\Xvec_1,\ldots,\Xvec_{i-1})
$$
is true, i.e., for every assignment of the universally quantified variables, the strategy results in a satisfied constraint system. Conversely, $S$ is a winning strategy for the universal player, if

$$
\forall \Xvec_i \in \D_i, i\in [k], Q_i=\exists :\ A^\exists \xvec \not\leq \bvec^\exists \bigwedge_{\substack{i\in[k]:\\ \Q_i=\forall}} \Xvec_i=s_i(\Xvec_1,\ldots,\Xvec_{i-1})
$$
is true, i.e., if regardless of the assignment of the existentially quantified variables, the strategy of the universal player leads to a violation of the constraint system.  

For QIP $\Q \Xvec\in\D: \Phi$, a strategy $s$ for assigning $\Xvec$ (which is just an assignment $\tauvec \in \D$), is called a \textit{winning move}, if there exists a winning strategy $S$ containing $s$. 
Furthermore, for $\Q \Xvec_1\in\D_1 \bar{\Q}\Xvec_2\in \D_2:  \Phi$ and assignment $\tauvec \in \D_1$ to $\Xvec_1$, the assignment $\muvec \in \D_2$ is called a \textit{countermove} to $\tauvec$, if $\muvec$ is a winning move for the QIP $\bar{\Q} \Xvec_2\in\D_2:\ \Phi[\tauvec]$.
We also adopt the notion of a multi-game as used in \cite{janota2018towards}. A QIP \textit{multi-game} is given by $\Q \Xvec \in \D: \{ \Phi_1\ldots \Phi_\ell\}$, where each $\Phi_i$, referred to as a \textit{subgame}, is a QIP beginning with $\bar{Q}$ or that solely contains a constraint system. A multi-game is won by player $\Q$, if there exists a move $\tauvec \in \D$ such that $\Q$ has a winning strategy for all subgames $\Phi_i[\tauvec]$.

\subsection{Additional Constraints}
To further enhance the modeling power of a QIP, we explicitly enable the restriction of universally quantified variables. This has been introduced in \cite{hartisch2016quantified} for QIP and has a similar notion as QCSP$^+$ \cite{verger2008guiding} and Quantified Linear Implication \cite{eirinakis2014quantified}. In robust optimization, this corresponds to introducing an uncertainty set \cite{goerigk2024introduction}, while in the context of QBF it is related to including cubes in the initial formula. To this end, let $A^\forall \xvec \leq \bvec^\forall$
be the \textit{universal constraint system} with $A^\forall \in \mathbb{Q}^{\bar{m} \times n}$ and $\bvec^\forall \in \mathbb{Q}^{\bar{m}}$, $\bar{m}\in\mathbb{Z}_{\geq 0}$. To ensure that the universal player's constraints remain independent of the existential variables, we require that $A^\forall_{\ell,i}=0$ for every $i\in[n]$ with $Q_i=\exists$ and for all $\ell\in [\bar{m}]$. We also assume that $\{\xvec \in \D \mid A^\forall \xvec\leq \bvec^\forall\}\neq\emptyset$, i.e., the ``\textit{uncertainty set}'' is nonempty. In presence of a universal constraint system we speak of a \textit{QIP with polyhedral uncertainty}. 
To this end, we redefine the domains of universally quantified variables, by making them depended on assignments $\tilde{\Xvec}_1,\ldots,\tilde{\Xvec}_{j-1}$ of the preceding levels. Specifically, for $j\in[k]$ with $\Q_j =\forall$, we define (with a slight abuse of notation)
\[
\D_j(\tilde{\Xvec}_1,\ldots,\tilde{\Xvec}_{j-1}) = \left\{ \Yvec \in \D_j \,\middle|\, 
\begin{aligned}
& \exists\, \Xvec_{j+1}\in \D_{j+1},\, \ldots,\, \Xvec_k\in \D_k \text{ such that } A^\forall \xvec \leq \bvec^\forall, \\
& \text{with } \xvec = (\tilde{\Xvec}_1,\ldots,\tilde{\Xvec}_{j-1},\Yvec,\Xvec_{j+1},\ldots,\Xvec_k)
\end{aligned}
\right\}, 
\]
while for existentially quantified variables, the domain remains unchanged. Note that, due to the structure of the universal constraint system, only universal variables from preceding levels can influence this domain.
This definition guarantees that when the universal player selects an assignment $\Yvec$ at level $j$, it obeys the lower and upper bounds and there exists an extension in later levels that results in the satisfaction of the universal constraint system. This way, no play can result in a violation of the universal constraint system. If $\bar{m}=0$, i.e., if there are no universal constraints, using the domain described above, boils down to the bounded domain and we obtain a standard QIP. Hence, for the remainder of the paper, whenever we write $\D$ for a universal domain, it may also be subject to a universal constraint system and we omit stating the dependence on variables of previous levels.


We also allow for a linear objective function $\cvec^\top \xvec$, $\cvec\in\mathbb{Z}^n$, which changes the nature of the game for the existential player, who now has to both satisfy the constraint system and minimize the objective value. Suppose $\Q_1=\exists$ and $\Q_k=\forall$. Then, the \textit{QIP optimization problem} can be stated as
$$
\min_{\Xvec_1 \in \D_1} \max_{\Xvec_2 \in \D_2} \cdots \min_{\Xvec_{k-1} \in \D_{k-1}} \max_{\Xvec_k \in \D_k} \; \cvec^\top \xvec \quad : \quad A^\exists \xvec \leq \bvec^\exists . $$
In particular, this optimization problem is feasible with optimal objective value $z^\star \in \mathbb{Z} $, if and only if the following holds: the QIP feasibility problem
$
\Q_1 \Xvec_1 \in \D_1 \; \Q_2 \Xvec_2 \in \D_2 \; \cdots \; \Q_k \Xvec_k \in \D_k : \; A^\exists \xvec \leq \bvec^\exists \wedge \cvec^\top \xvec \leq z^\star
$
can be won by the existential player, while for the QIP
$
\Q_1 \Xvec_1 \in \D_1 \; \Q_2 \Xvec_2 \in \D_2 \; \cdots \; \Q_k \Xvec_k \in \D_k : \; A^\exists \xvec \leq \bvec^\exists \wedge \cvec^\top \xvec \leq z^\star-1
$
no winning strategy for the existential player exists. The notion of a winning strategy remains the same and we only add the term \textit{optimal winning move} for the existential player, which is the winning move that attains the best worst-case objective. 
\begin{example}
Let us consider an example with $n=5$ variables, with domains $D_1=\{0,1,2\}$ and $D_i=\{0,1\}$, for $i\in\{2,3,4,5\}$. The quantification sequence (without the bounding domains for sake of presentation) is given by $\exists x_1 \forall x_2 \exists x_3 \forall x_4 \exists x_5$ and we have a universal constraint system consisting of a single constraint $x_2+x_4\leq 1$. Hence, $\D_2=\{0,1\}$, $\D_4(x_2=0)=\{0,1\}$, and $\D_4(x_2=1)=\{0\}$.  Let the QIP optimization problem be given by
$$\min_{x_1 \in \D_1} \max_{x_2 \in \D_2} \min_{x_3 \in \D_3} \max_{x_4 \in \D_4} \min_{x_5 \in \D_5}  -  x_1 + 2 x_2 - 3 x_3 + x_4 + 2 x_5 :
\begin{matrix}
2x_1 & &+x_3 & &-x_5 &\leq 4\\
\hfill x_1 &-x_2 &+x_3 & &-x_5 &= 1\\
&+x_2 &+x_3 &-x_4 &-x_5 &\leq 2\\
\hfill x_1 & +x_2&+x_3&+x_4&&\leq 3.
\end{matrix}
$$

Here, $x_1=0$ is not a winning move, as there is a countermove $x_2=1$ which renders the second constraint violated. 
There are two winning strategies $S=(s_1, s_3, s_5)$ and $T=(t_1, t_3, t_5)$ for the existential player: $s_1=1$, $s_3=1$, and $s_5=1-x_2$ as well as  $t_1=2$, $t_3=0$, and $t_5=1-x_2$. Note, that if the universal constraint was not present, neither of them would be a winning strategy, as the fourth existential constraint is always violated in case of universal strategy $x_2=x_4=1$.

The worst-case objective value of $T$ is equal to $1$, stemming from the worst case scenario $x_2=0$, $x_4=1$. The same universal assignment also defines the worst-case for $S$, resulting in an objective value of $-1$. Hence, $S$ is a better strategy than $T$ and in fact $x_1=1$ is the optimal winning move. To further clarify the used notation, $A^\exists_{(-x_1)}\xvec_{(-x_1)} \leq \bvec^\exists_{x_1=1}$ is 
$$
\begin{matrix}
 &+x_3 & &-x_5 &\leq 2\\
-x_2 &+x_3 & &-x_5 &= 0\\
+x_2 &+x_3 &-x_4 &+x_5 &\leq 2\\
 +x_2&+x_3&+x_4&&\leq 2,
\end{matrix}
$$
which is the constraint system of the remaining QIP $\Phi[\tau]$ after making the optimal winning move $\tau=x_1=1$.

\end{example}

\section{Expansion-Based Quantified Integer Programming Solver}
\subsection{The Framework}
In this section we present \added{our novel solution approach} the \textit{Expansion-based Quantified Integer Programming Solver} (\rqis) that is able to solve QIP problems with polyhedral uncertainty\footnote{available at \url{https://github.com/MichaelHartisch/EQuIPS}
}.
\rqis is based on an algorithm that iteratively solves abstractions until convergence is achieved. The idea is essentially the same as in the solvers \rqs and \qfun and our pseudo-code is based on the one shown in \cite{janota2018towards}. \added{We adapt their QBF approach to general integer domains and linear functions by accounting for semantic and technical differences, introducing an expansion rule tailored to QIP, and defining a refinement step specific to QIP.} The solution process starts with an empty abstraction of the full QIP---a trivial problem where only compliance with variable domains (possibly including universal constraints) have to be followed---and recursively refines the abstraction by adding found countermoves. This way, it partially expands the inner quantifiers until it is sufficient to solve the original QIP. The pseudocode in Algorithm~\ref{alg:br} shows the main function that calls itself in a nested fashion.  Section~\ref{app::SoundAndComplete} gives a formal proof on correctness and insights regarding the underlying proof system.

\begin{algorithm}
	\begin{algorithmic}[1]
	\Require{multi-game $\Q \Xvec \in \D: \{\Phi_1 \dots \Phi_\ell$\}}
\Ensure{assignment of $\Xvec$ that wins the multi-game or $\bot$ if no winning move exists}
		\If {each $\Phi_l$ is quantifier free}
		\Return $\wins(\Q \Xvec\in \D: \{\Phi_1 \dots \Phi_\ell\})$ \label{line::wins}
		\EndIf
		\State $\alpha \gets \Q \Xvec\in \D: \varnothing$ \Comment{start with empty abstraction} \label{line::empty}
		\While{\textsf{True}}
		\State $\tauvec' \gets \findMove(\alpha)$ \Comment{find winning move for abstraction} \label{line::find_tau}
		\If {$\tauvec'=\bot$}
		\Return $\bot$ \label{line::bot} \Comment{no move for abstraction $\Rightarrow$ no move for multi-game}
		\EndIf
		\State $\tauvec \gets \extract(\tauvec',\Xvec)$\label{line::extract} 
		\State $\lambda \gets -1$
		\For{$l \in [\ell]$ \label{line::start_counter}}
        \State $\muvec \gets$ \findMove($\Phi_l[\tauvec]$) \Comment{find countermove to $\tauvec$}
		\If{ $\muvec \neq \bot$}
		 $\lambda \gets l$
		\EndIf
	\EndFor
		\IfThenElse{$\lambda=-1$}{\Return $\tauvec$}{$\alpha \gets $\refine($\alpha, \Phi_l, \muvec$)\label{line::end_counter}\Comment{refine abstraction}}


		\EndWhile
	\end{algorithmic}
	\caption{\findMoveNoX$(Q\Xvec\in \D: \{\Phi_1 \dots \Phi_\ell\})$---Find Winning Move for Multi-Game\label{alg:br}}
\end{algorithm}

The initial input of Algorithm~\ref{alg:br} is the QIP $\Q \Xvec\in \D: \Phi$, i.e., a multi-game with a single subgame. In Line~\ref{line::wins}, we deal with the case where each subgame of the multi-game is quantifier free. It is noteworthy, that the \wins function call significantly differs from the one used in \cite{janota2018towards}. The main reason is that in case of QBF and $\Q=\forall$,  a winning move for the universal player can be found by solving a SAT problem consisting of the conjunction of the negated Boolean formulas of each subgame. Recall, that the goal for the existential player is to identify a move that ensures each constraint system of every subgame to be  violated. In the case of QIP, it is not immediately clear what the counterpart of a negated Boolean formula would be for a system of linear constraints. We discuss the \wins function in more detail in Section~\ref{sec::wins}. 

In Line~\ref{line::empty} of Algorithm~\ref{alg:br} we initialize the abstraction, containing no subgames. 
A move that wins the current abstraction is found in (Line~\ref{line::find_tau}). 
If $\alpha$ is the empty abstraction, this call to \rqis will immediately invoke \wins, returning any assignment that satisfies the domain $\D$. When $\Q=\exists$, any assignment from the bounded domain may be returned. However, when $\Q=\forall$, compliance with the universal constraint system must also be ensured.

A technical detail to note is that the abstraction may include copies of later-stage variables (due to the subsequent \refine call), necessitating the extraction of only those assignments corresponding to the variables of interest, $\Xvec$, in Line~\ref{line::extract}. After obtaining the corresponding move $\tauvec$, in case a countermove $\muvec$ exists, the abstraction is refined by adding the subgame that corresponds to $\muvec$ (Lines~\ref{line::start_counter}--\ref{line::end_counter}). We will provide more insights into the \refine function in Section~\ref{sec::refine}. If we are not able to find a move that wins the abstraction, we know by construction, that there cannot exist a winning move for the initial multi-game, and return $\bot$ in Line~\ref{line::bot}. Similarly, if no countermove to $\tauvec$ can be found, this means that $\tauvec$ is not only a winning move for the abstraction but also for the entire multi-game. In this case, we return $\tauvec$ (see Line~\ref{line::end_counter}).

\subsection{\wins on Integer Linear Programs\label{sec::wins}}

One crucial aspect of our algorithm is the call $\wins(\Q \Xvec \in \D: \{\Phi_1 \dots \Phi_\ell\})$, where all subgames $\Phi_1 \dots \Phi_\ell$ are quantifier free, i.e., they each only contain a constraint system with variables $\xvec=\Xvec$. Simply speaking, this call tries to answer the question, whether player $\Q$ can find an assignment of $\Xvec$, which is a winning move for all subgames $\Phi_1, \dots, \Phi_\ell$. 
\added{The notation of $\wins$ is borrowed from \qfun, and in \qfun, it is more or less a call to a SAT solver. Our subroutine of \wins, however requires some non trivial modification.}
In the case of a QBF, this problem can be stated in a straight-forward manner as if $\Q=\exists$ one has to find an assignment satisfying the conjunction of the $\ell$ copies of the Boolean function, while if $\Q=\forall$ a solution to the conjunction of all negated Boolean formulas must be found. But in particular regarding the latter case, there is no counterpart in integer programming: there is no notion of a ``negated constraint system''.

For $l\in [\ell]$, let $A_l^\exists \xvec \leq \bvec_l^\exists$ be the constraint system corresponding to subgame $\Phi_l$.
Recall, that if $\Q=\exists$, a winning move  ensures that the constraint system of each subgame is satisfied. Hence, in order to find a winning move, we need to find a solution to Problem~\eqref{prob::ExistEval}:
\begin{subequations}
\label{prob::ExistEval}
\begin{align}
& A_l^\exists\xvec\leq \bvec^\exists_l \quad \forall l\in[\ell]\\
&\xvec \in \D
\end{align}
\end{subequations}
This is a standard integer program and can be solved using any standard solver.

If $\Q=\forall$, we need to find an assignment of $\xvec \in \D$, such that all constraint systems are violated, while obeying the own domain, i.e., the uncertainty set. Violating a constraint system means that at least one of its constraints is not satisfied. To the end of modeling this as an integer linear program, let $\Lvec^l \in \mathbb{Q}^{m}$ be a vector for each $l \in [\ell]$, where  
$$
L^l_j \leq \min_{\xvec\in\D}(A^\exists_l)_{j,\star}\xvec = \sum_{\substack{i\in [n] \\ (A^\exists_l)_{j,i}<0}} (A^\exists_l)_{j,i}u_i + \sum_{\substack{i\in [n] \\ (A^\exists_l)_{j,i}\geq0}} (A^\exists_l)_{j,i}l_i
$$   
for every $ j \in [m] $. In other words, the $j$-th entry of $\Lvec^l$ provides a lower bound that is not larger than the minimum possible value of the left-hand side of row $j$ in constraint system $l$. As a result, the inequality $\Lvec^l \leq A^\exists_l \xvec $ is trivially fulfilled for any $\xvec \in \D$.  

These lower bounds only need to be computed once at the start of our solver since the constraint system remains (basically) the same across all subgames, only differing in the values of already assigned variables. One also could refine these bounds dynamically for each subgame by considering already assigned variables, aiming to accelerate the IP solution process through potentially improved relaxations. However, this approach comes at the cost of additional computational effort, as the bounds must be recomputed at each invocation of \wins. In our implementation, we opted for the weaker bounds that only need to be calculated once.  

Furthermore, we need to be able to certify a violation of a constraint. To this end, we need the following lemma.

\begin{lemma}
Given a linear constraint $\sum_{i\in[n]}a_i x_i \leq b$ with rational coefficients $a_i, b \in\mathbb{Q}$ and integer variables $x_i$. Then, for any integer assignment $\tilde{\xvec}$, it holds that 
$\sum_{i\in[n]}a_i \tilde{x}_i \not\leq b \Leftrightarrow \sum_{i\in[n]}a_i \tilde{x}_i \geq b +r$, where $r=\frac{1}{d}$, for $d=lcd\{a_1,\ldots,a_n,b\}$, being the reciprocal of the lowest common denominators of the $a_i$ and $b$. 
\end{lemma}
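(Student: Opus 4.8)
The plan is to exploit the discreteness of the set of attainable left-hand side values. The central observation is that, by definition of $d=\mathrm{lcd}\{a_1,\ldots,a_n,b\}$, each product $d\,a_i$ and the value $d\,b$ are integers. Hence for any integer assignment $\tilde{\xvec}$ the quantity $d\sum_{i\in[n]}a_i\tilde{x}_i=\sum_{i\in[n]}(d\,a_i)\tilde{x}_i$ is an integer, being a finite sum of products of integers, and likewise $d\,b\in\mathbb{Z}$. In other words, both $\sum_{i\in[n]}a_i\tilde{x}_i$ and $b$ lie in the lattice $\frac{1}{d}\mathbb{Z}$, so whenever they differ they differ by an integer multiple of $r=\frac{1}{d}$.

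For the forward implication I would assume $\sum_{i\in[n]}a_i\tilde{x}_i\not\leq b$, which over the rationals means $\sum_{i\in[n]}a_i\tilde{x}_i>b$. Writing $S=\sum_{i\in[n]}a_i\tilde{x}_i$, the difference $S-b$ is a strictly positive element of $\frac{1}{d}\mathbb{Z}$; since the smallest strictly positive element of $\frac{1}{d}\mathbb{Z}$ is $\frac{1}{d}=r$, we obtain $S-b\geq r$, i.e. $\sum_{i\in[n]}a_i\tilde{x}_i\geq b+r$. The reverse implication is immediate and does not even require integrality: if $\sum_{i\in[n]}a_i\tilde{x}_i\geq b+r$ then, because $r>0$, we have $\sum_{i\in[n]}a_i\tilde{x}_i>b$, hence $\sum_{i\in[n]}a_i\tilde{x}_i\not\leq b$.

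I expect the only delicate point to be the bookkeeping around the definition of the lowest common denominator: one must confirm that $d$ clears the denominators of all the coefficients \emph{simultaneously}, so that $d\,a_i\in\mathbb{Z}$ for every $i$ and $d\,b\in\mathbb{Z}$, and that this is precisely what makes $r=\frac{1}{d}$ the correct granularity. Everything else reduces to the elementary fact that two distinct points of the lattice $\frac{1}{d}\mathbb{Z}$ differ by at least $\frac{1}{d}$. Note that no optimization over $\tilde{\xvec}$ and no appeal to the variable domains $\D$ is needed; the claim holds for every integer assignment, which is exactly the form in which it will later be used to certify the violation of a single constraint.
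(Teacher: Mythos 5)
Your proposal is correct and follows essentially the same route as the paper's proof: both clear denominators via $d$ so that the left-hand side and $b$ lie in $\frac{1}{d}\mathbb{Z}$, and then conclude that any strictly positive gap must be at least $\frac{1}{d}$. Your write-up is in fact slightly more complete, since you also state the (trivial) reverse implication explicitly, which the paper leaves implicit.
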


\begin{proof}
Let $d$ be the lowest common denominator of the $n+1$ parameters. Then the constraint can be rewritten as 
$\sum_{i\in[n]}\frac{\tilde{a}_i}{d} x_i \leq \frac{\tilde{b}}{d}$, with integers $\tilde{a}_i$ and $\tilde{b}$. For any assignment $\tilde{\xvec}$ with $\sum_{i\in[n]}a_i \tilde{x}_i > b$ the gap between the right-hand side and the left-hand side can be stated as  $\left\vert\frac{\tilde{b}-\sum_{i\in[n]}\tilde{a}_ix_i}{d}\right\vert>0$, where the numerator is integer. Hence, a lower bound for this gap is attained if the numerator is equal to one, i.e., $\frac{1}{d}$ is a lower bound on the violation of the constraint, which proves the claim. 
\end{proof}
\added{This lemma shows the need for integrality of universally quantified variables in our approach, as otherwise, we would not be able to specify a value of minimal violation, as a continuous variable could violate the right-hand side by an arbitrarily small value. For existentially quantified variables this in principle is not necessary in our current setting.}
Now, let $\rvec^l\in\mathbb{Q}^{m}$ be a vector with positive entries less than or equal to the reciprocals of the lowest
common denominators of the rows of $A^\exists_l$ and $\bvec^\exists_l$,
 ensuring for any row $j\in[m]$ and any $\xvec\in \D$ that $(A^\exists_l)_{j,\star}\xvec \not \leq (\bvec^\exists_l)_k \Leftrightarrow  (A^\exists_l)_{j,\star}\xvec \geq (\bvec^\exists_l)_k +r^l_j$. Note that the lowest common denominator of the $n+1$ rational numbers can be computed in polynomial time, using the Euclidean algorithm, assuming that the denominators of the coefficients are know. But also note that smaller values to bound the gap between right-hand side and left-hand side are allowed. E.g.\ for the constraint $0.5x_1+2x_2<=4$ the designated value would be $0.5$, as the lowest common denominator of $0.5$, $2$ and $4$ is $2$ with a reciprocal of $0.5$. On the other hand, finding the number with the most decimal places also is valid. Let $p$ be this number. Then setting $r^l_j=10^{-p}$ also suffices. In the above example this equates to setting $r^l_j=0.1$. The latter case is what we implemented. Further note, that that if all coefficients are integers, this value always can be set to one. 
  
Now, consider the following integer linear program~\eqref{prob::AllEval}:

\begin{subequations}
\centering
\label{prob::AllEval}
\begin{align}
	&-A^\exists_l\xvec-(\Lvec^l-\bvec^\exists_l-\rvec^l)\y^l\leq -\Lvec^l&&\forall l\in[\ell]\label{eq::2a}\\
	&v_l\leq \sum_{j\in[m]}y^l_j&&\forall l\in[\ell]\\
	&v\leq v_l &&\forall l\in[\ell]\\
    & v \geq 1\\
	& A^\forall \xvec \leq \bvec^\forall\\
	& \xvec \in \D\\
	& v, v_1, \ldots, v_\ell \in \{0,1\}\\
	& \yvec^l\in \{0,1\}^{m}&&\forall l\in[\ell]
\end{align}
\end{subequations}
The idea is that the indicator variable $v$ only can be set to 1, if all $\ell$ systems are violated by an assignment of $\xvec$. We can immediately see, that in order to set $v$ to $1$, all $v_l$ must be set to $1$. Hence, for any $l\in[\ell]$ there has to exist at least one constraint $j\in[m]$ for which $y_j^l=1$. So let us consider the setting of $y_j^l$. Setting $y_j^l=0$ can never result in a violation of the respective constraint, as $A^\exists_l \xvec\geq \Lvec^l$ is always true. On the other hand, it is only feasible to set $y_j^l$ to $1$, if $(A^\exists_l)_{k,*}\xvec \geq (b^\exists_l)_j + r^l_j$ holds, which is only the case, if $\xvec$ violates constraint $j$ of system $l$. Consequently, $v_l$ is an indicator whether constraint system $l$ is violated. Only if we find some $\xvec$ that violates all constraint systems while at the same time obeys the uncertainty set given by $A^\forall \xvec \leq \bvec^\forall$, we can also set $v=1$. In other words: If and only if there exists an assignment of $\xvec\in\D$ with $A^\exists_l\xvec\not\leq \bvec_l^\exists$ for all $l\in [\ell]$, Problem~\eqref{prob::AllEval} has a feasible solution. 
\begin{example} \added{Consider $0.5x_1+2x_2<=4$ with bounds $-2 \leq x_1 \leq 2$ and $0\leq x_2 \leq 3$. The lower bound $L$ for the left-hand side is $-1$. We set $r=0.5$ to indicate the minimal violation of this constraint. Consequently, the corresponding Constraint~\eqref{eq::2a} is given by $-0.5x_1-2x_2+5.5y \leq 1$ and in particular, in case of assigning $y=1$, $0.5x_1+2x_2\geq 4.5$ must be true, indicating the violation of the original constraint. }
\end{example}

The pseudocode of the \wins function is presented in Algorithm~\ref{alg:wins}. In our implementation we utilize the solver \gurobi \cite{gurobi} to solve Problems~\eqref{prob::ExistEval} and \eqref{prob::AllEval}.

\begin{algorithm}[htb]
\begin{algorithmic}[1]
	\Require{multi-game $\Q \Xvec \in \D:\{ \Phi_1 \dots \Phi_\ell\}$ with all $\Phi_l$ quantifier free}
\Ensure{assignment of $\Xvec$ that wins the multi-game or $\bot$ if no winning move exists}
\IfThenElse{$\Q = \exists$}{$\pi\gets$ Problem~\eqref{prob::ExistEval}}{$\pi\gets$ Problem~\eqref{prob::AllEval}}
\IfThenElse{$\pi$ is feasible}{\Return solution on $\pi$}{\Return $\bot$}
	\end{algorithmic}
	\caption{$\wins(\Q \Xvec \in \D:\{ \Phi_1 \dots \Phi_\ell\})$---Solve Multi-Game with a Single Move\label{alg:wins}}
\end{algorithm}

\subsection{Refinement by Expansion\label{sec::refine}}

It can be argued that the unique aspect of \rqs and \qfun that sets it apart from other CEGAR approaches is that they extend the number of variables being looked at in the abstraction via expansion.
This is much easier to spot in the original \rqs than in \qfun (our description in  Algorithm~\ref{alg:br} is based off \qfun). In \qfun and our description this is achieved by having multiple listed subproblems after the outer quantifier block, as these subgames technically each have separate variables. The function \refine adds an additional subgame to the abstraction, based on the countermove $\muvec$, that was found to beat our move $\tauvec$ in one of the original subgames.

Given an abstraction $\alpha=\Q \Xvec \in \D_{\Xvec} : \{ \Psi_1, \dots, \Psi_n\}$, with subgames $\Psi_i$, $i\in[n]$, for which a winning move $\tauvec$ was found. Let $\Phi$ be another subgame, in which $\tauvec$ is not a winning move, i.e., there exists a countermove $\muvec$ that wins $\Phi[\tauvec]$. Let the quantification sequence of $\Phi$ start with $\bar{\Q}
\Yvec$. If $\Phi=\bar{\Q}\Yvec \in \D_{\Yvec}:A^\exists \xvec \leq \bvec^\exists$, then

$$\refine(\alpha,\Phi,\muvec)= \Q \Xvec\in D_{\Xvec}: \left\lbrace\Psi_1, \dots , \Psi_n, A^\exists_{(-\Yvec)}\xvec_{(-\Yvec)} \leq \bvec^\exists_{\Yvec=\muvec}\right\rbrace,$$
i.e., the refined abstraction contains an additional constraint system accounting for the scenario of the found countermove.
If  $\Phi=\bar{\Q}\Yvec\in\D_{\Yvec} \Q \Zvec\in \D_{\Zvec}: \Lambda$ for a QIP $\Lambda$, then
$$\refine(\alpha,\Phi,\muvec)= \Q \Xvec \in \D_{\Xvec} \Zvec^{(\Yvec=\muvec)}_{\Phi} \in \D_{\Zvec} :\left\lbrace\Psi_1, \dots , \Psi_n,  \Lambda^{(\Yvec=\muvec)}_{\Phi}[\muvec]\right\rbrace,$$ where $\Zvec^{(\Yvec=\muvec)}_{\Phi}$ is a copy of $\Zvec$ that represents the move of $\Zvec$ in case $\Yvec$ is set to $\muvec$, and the variables of $\Lambda^{(\Yvec=\muvec)}_{\Phi}$ are copies of the variables of $\Lambda$ having the same annotation as $\Zvec^{(\Yvec=\muvec)}_{\Phi}$. When $\Zvec$ is universal, we also need to make sure it satisfies the uncertainty set and thus $\D_{\Zvec}$ is meant to include the respective constraints on the annotated variables. 
We sometimes write $\Zvec^{(\muvec)}$ instead of $\Zvec^{(\Yvec=\muvec)}$.



%
%
%

\begin{example}
	Consider the QIP
	$\exists  x_1 x_2 \forall z_1 z_2 \exists t  d : (x_1+x_2+t-2d=0)\wedge(z_1+z_2+t\geq 1)\wedge(-z_1-z_2-t\geq -2)$
with all binary domains. Consider the outer level. Initially $\alpha$, the abstraction, will be empty, which means that any binary assignment of the variables $x_1$ and $x_2$ is feasible. If we choose $\tauvec=(0,0)$, the universal response is to assign $z_1=z_2=0$, at which point the universal player wins. Hence, a countermove $\muvec=(0,0)$ to $\tauvec$ is found and consequently the abstraction $\alpha$  is refined to become $\alpha= \exists x_1 x_2 t^{(00)} d^{(00)}: (x_1+x_2+t^{(00)}-2d^{(00)}=0) \wedge (t^{(00)}\geq 1)\wedge(-t^{(00)}\geq -2)$. The $\rqis$ call on this refined abstraction will end up in a call of the \wins function, as after the initial existential quantifier, no further quantifiers follow. A solution to the respective constraint system is $\tauvec'= (0, 1, 1, 1)$, which contains assignments of $x_1$, $x_2$, $t^{(00)}$, and $d^{(00)}$. Extracting the values of the relevant variables $x_1$ and $x_2$ we obtain $\tauvec= (0,1)$. We again check whether we find a countermove to $\tauvec$, in which case $z_1=z_2=1$ is produced. 
We once again adapt the abstraction by calling $$\refine(\alpha, \forall z_1 z_2\exists t d: (x_1+x_2+t-2d=0)\wedge(z_1+z_2+t\geq 1)\wedge(-z_1-z_2-t\geq -2), (1,1)),$$ yielding the refined abstraction 
\begin{align*}
\hspace{-0.5cm}\exists x_1 x_2 t^{(00)} d^{(00)} t^{(11)} d^{(11)}: &
\left\lbrace\left( (x_1+x_2+t^{(00)}-2d^{(00)}=0) \wedge (t^{(00)}\geq 1)\wedge(-t^{(00)}\geq -2)\right)\right., \\ & \left.\left( (x_1+x_2+t^{(11)}-2d^{(11)}=0) \wedge (t^{(11)}\geq -1)\wedge(-t^{(11)}\geq 0) \right) \right\rbrace ,
\end{align*}
with two subgames. As each subgames is quantifier free, another call to \wins is invoked and the IP solver is called on the constraint system
\begin{align*}
x_1+x_2+t^{(00)}-2d^{(00)}&=0	&	t^{(00)}&\geq1 	&	-t^{(00)}&\geq-2\\
x_1+x_2+t^{(11)}-2d^{(11)}&=0	&	t^{(11)}&\geq -1	& -t^{(11)}&\geq 0\\
x_1, x_2 ,t^{(00)}, d^{(00)}, t^{(11)}, d^{(11)}&\in\{0,1\}.
\end{align*}

As this IP is infeasible, we know that there is no move for $(x_1,x_2)$ that wins the abstraction, and hence, there cannot exist a move for $(x_1,x_2)$ that wins the game.

\end{example}

\subsection{Underlying Proof System and Correctness\label{app::SoundAndComplete}}

It has been well established that \rqs works with the QBF proof system \ecalculus \cite{JM15}. It comes as no surprise that the underlying proof system of \rqis acts much the same, instead of describing the SAT oracle as a resolution system, we describe the IP call as a cutting planes proof. Figure~\ref{fig:psystem} describes the $\forall$Exp+Cutting Planes proof systems which is the underlying proof system for refutations where the first quantifier is existential.

\begin{figure}[h]
	\framebox{\parbox{0.9\textwidth}{
			\begin{prooftree}
				\AxiomC{{$\displaystyle\sum_{k\in [n]:\ Q_k=  \exists} a_k x_k + \displaystyle\sum_{k\in [n]:\ Q_k=  \forall} a_k x_k \leq b$}\text{ in matrix}}
				\RightLabel{(Axiom)}
				
				\UnaryInfC{{$\displaystyle\sum_{k\in [n]:\ Q_k=  \exists} a_k x^{[\tauvec]}_k + \displaystyle\sum_{k\in [n]:\ Q_k=  \forall} a_k \tauvec(x_k) \leq b $}}
			\end{prooftree}
			\begin{itemize}
				\item[-] {$\tauvec$} is a {complete} assignment to universal variables that satisfies the universal constraint system\\
				\item[-] For $x_k^{[\tauvec]}$, $[\tauvec]$ takes only the part of $\tauvec$ that is left of $x_k$
			\end{itemize}
			\begin{itemize}
				\item {\bf Addition}: From $ \displaystyle\sum_{k\in [n]}
				a_k x_k \leq b$ and $ \displaystyle\sum_{k\in [n]} \alpha_k
				x_k \leq \beta$, derive $ \displaystyle\sum_{k\in [n]} (a_k
				+ \alpha_k) x_k \leq b + \beta$.
				
				\item {\bf Multiplication}: From
				$ \displaystyle\sum_{k\in [n]} a_k x_k \leq b$, derive
				$ \displaystyle\sum_{k\in [n]} da_k x_k \leq db$, where
				$d \in \mathbb{Z}^+$.
				
				\item {\bf Division}: From $ \displaystyle\sum_{k\in [n]}
				a_k x_k \leq b$, derive $\displaystyle\sum_{k\in [n]}
				\frac{a_k}{d} x_k \leq \left\lceil \frac{b}{d} \right\rceil$,
				where  $d \in \mathbb{Z}^+$ divides each $a_k$.
			\end{itemize}
	}}
	\caption{The proof system $\forall$Exp+Cutting Planes \label{fig:psystem}}
\end{figure}

\begin{theorem}\label{thm::ProofSystem}
	$\forall$Exp+Cutting Planes is a sound and complete proof system for QIP.
	
\end{theorem}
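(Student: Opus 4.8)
The plan is to route the entire argument through one semantic bridge. For a QIP $\Phi$ whose outermost quantifier is existential, let $\mathrm{Exp}(\Phi)$ denote the integer linear system over the annotated existential variables $\{x_k^{[\tauvec]} : Q_k=\exists\}$ that collects, for every complete universal assignment $\tauvec$ satisfying $A^\forall\xvec\le\bvec^\forall$ and every matrix row, exactly the inequality produced by the Axiom rule of Figure~\ref{fig:psystem}, together with the annotated domain-bound inequalities $l_k\le x_k^{[\tauvec]}\le u_k$; two copies $x_k^{[\tauvec]}$ and $x_k^{[\tauvec']}$ are identified iff $\tauvec$ and $\tauvec'$ agree on the universal variables left of $x_k$. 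The central step is the bridging Lemma: $\Phi$ has an existential winning strategy iff $\mathrm{Exp}(\Phi)$ has an integer solution. For the forward direction I would read a solution off a winning strategy $S$ by letting $x_k^{[\tauvec]}$ be the value $S$ assigns to $x_k$ in the play where the universal variables are fixed to $\tauvec$; determinism of $S$ together with the quantifier order guarantees that this value depends only on the universal prefix $[\tauvec]$, so the assignment is well typed and satisfies every expansion constraint, and conversely any feasible integer point defines such a strategy. The step I expect to be the main obstacle is reconciling this with the extendability-based domains $\D_j(\cdot)$ from the polyhedral-uncertainty definition; the key observation is that, since $A^\forall$ has zero columns on existential variables, the universal plays admissible under those domains are exactly the complete assignments satisfying $A^\forall\xvec\le\bvec^\forall$, so expanding over that set is lossless.

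With the bridge in place, soundness is the easy direction, and I would prove it contrapositively. Suppose $\Phi$ is true and fix, via the Lemma, an integer point $\sigma$ feasible for $\mathrm{Exp}(\Phi)$. Every Axiom output is by construction one of the constraints of $\mathrm{Exp}(\Phi)$---its side condition forces $\tauvec$ into the admissible set---and is therefore satisfied by $\sigma$; moreover each cutting-planes rule is satisfied by every integer point satisfying its premises, the Division rule being the only nonroutine case, where integrality of $\sigma$ justifies the rounding of the right-hand side. A routine induction on the derivation then shows that $\sigma$ satisfies every derived inequality, so no derivation can terminate in the contradiction $0\le-1$. Hence the mere existence of a refutation forces $\Phi$ to be false.

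For completeness I would run the implication in reverse. If $\Phi$ is false, the Lemma makes $\mathrm{Exp}(\Phi)$ an infeasible system of rational linear inequalities whose feasible region lies inside the bounded box cut out by the annotated domain bounds. I would then invoke the classical Chv\'atal--Gomory completeness of the cutting-plane calculus for bounded rational polyhedra: the integer hull of the bounded LP relaxation is empty and is reached from the defining inequalities in finitely many rounds of nonnegative combination and rounded division, which is exactly a cutting-planes derivation of $0\le-1$ from the constraints of $\mathrm{Exp}(\Phi)$. Prefixing each appeal to such a constraint with its generating Axiom application turns this into a genuine \(\forall\)Exp+Cutting Planes refutation of $\Phi$. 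Boundedness of the domains is precisely what licenses this classical result---the ingredient the analogous proof for \ecalculus obtains for free from Boolean variables---and I would flag that the one nonroutine modeling point is keeping the domain bounds among the axioms, without which cutting planes over $\mathbb{Z}^n$ need not be refutationally complete. The figure restricts attention to refutations whose first quantifier is existential, and a leading universal block is absorbed verbatim into the expansion, so no generality is lost.
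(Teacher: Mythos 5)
Your proposal is correct and takes essentially the same route as the paper's proof: a full Shannon expansion over exactly the universal assignments admissible under the universal constraint system, satisfiability equivalence between this expansion and the original QIP (your bridging lemma), and then soundness and refutational completeness of Cutting Planes on the resulting integer program. The paper's version is far terser---it asserts the satisfiability equivalence without the strategy-correspondence argument and leaves implicit the points you spell out (the equivalence of the extendability-based domains with the set of complete assignments satisfying $A^\forall \xvec \leq \bvec^\forall$, and the need for the domain bounds among the axioms so that Chv\'atal--Gomory completeness for bounded rational polyhedra applies).
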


\begin{proof}
	Given a QIP, we take a full Shannon expansion on all universal quantifiers. In case a universal constraint system is present, only universal variable assignments satisfying this system are considered.  
	Every potential axiom line is found as a conjunct in this expansion. The full expansion is satisfiability equivalent with the original QIP and contains no universal quantifiers. Therefore given the completeness and soundness of the Cutting Planes proof system, we prove $\forall$Exp+Cutting Planes is a sound and complete proof system for QIP.
\end{proof}

We will now give an overview on the soundness of \rqis. We start with an observation about Algorithm~\ref{alg:br} and the description of refinement in Section~\ref{sec::refine}.

\begin{observation}\label{obs:abs}
	At any point in a run of \rqis with outer block $\Q \Xvec \in D_{\Xvec}$, for every subgame of the abstraction there is an assignment $\muvec$ to the first inner block variables $\Yvec$ such that the subgame is of one of two forms:
	
	\begin{enumerate}
		\item $\Psi_i= A^\exists_{(-\Yvec)}\xvec_{(-\Yvec)} \leq \bvec^\exists_{\Yvec=\muvec}$ when $\Phi_{l_i}= \bar \Q \Yvec\in \D_{\Yvec}: A^\exists \xvec \leq \bvec^\exists$, or 
		\item $\Psi_i= \Lambda^{(\Yvec=\muvec)}_{\Phi_{l_i}}[\muvec]$ when $\Phi_{l_i}= \bar \Q \Yvec\in \D_{\Yvec} \Q \Zvec \in \D_{\Zvec}: \Lambda$, where $\Lambda$ itself is a QIP.
	\end{enumerate}

	In the first case we have a QIP with only an outer block, in the second case we have a QIP with at least one inner block, $\Lambda$ is a QIP representing the rest of blocks and the constraints.
	
	
\end{observation}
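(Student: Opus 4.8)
The plan is to prove Observation~\ref{obs:abs} by induction on the number of times \refine has been invoked during the fixed run of \rqis with outer block $\Q\Xvec \in \D_{\Xvec}$. The invariant I would maintain is exactly the stated dichotomy: at every point, each subgame currently in the abstraction $\alpha = \Q\Xvec : \{\Psi_1,\dots,\Psi_n\}$ has either form~1 or form~2, witnessed by the assignment $\muvec$ to the first inner block $\Yvec$ of the original subgame $\Phi_{l_i}$ from which it was spawned. Since the claim is purely structural, the ``proof'' is really a verification that the construction in Section~\ref{sec::refine} produces nothing outside these two templates.

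First I would dispatch the base case. Immediately after Line~\ref{line::empty} the abstraction is $\Q\Xvec \in \D : \varnothing$, which has no subgames, so the invariant holds vacuously. I would then record the key observation that within a single invocation of \rqis the abstraction $\alpha$ is mutated only by the \refine call in Line~\ref{line::end_counter}: the recursive calls $\findMove(\alpha)$ and $\findMove(\Phi_l[\tauvec])$ operate on their own local abstractions and leave the current $\alpha$ untouched. Hence it suffices to track the effect of successive \refine calls.

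For the inductive step I would assume the invariant holds before a call $\refine(\alpha,\Phi_l,\muvec)$ and show it is preserved. This call leaves $\Psi_1,\dots,\Psi_n$ unchanged, and they satisfy the invariant by hypothesis; it appends exactly one new subgame. Here $\Phi_l$ is an original subgame of the invocation's multi-game, and because \rqis reached the refinement branch the countermove $\muvec$ returned by $\findMove(\Phi_l[\tauvec])$ is a winning move for a QIP whose leading block is $\bar{\Q}\Yvec$; thus $\muvec$ is an assignment to exactly the first inner block $\Yvec$. The two cases of the definition of \refine then split according to quantifier alternation: if $\Phi_l = \bar{\Q}\Yvec \in \D_{\Yvec}: A^\exists\xvec \leq \bvec^\exists$ (a single inner block), the appended subgame is $A^\exists_{(-\Yvec)}\xvec_{(-\Yvec)} \leq \bvec^\exists_{\Yvec=\muvec}$, which is form~1 with this $\muvec$; if instead $\Phi_l = \bar{\Q}\Yvec \in \D_{\Yvec}\,\Q\Zvec \in \D_{\Zvec}: \Lambda$ (at least two blocks), the appended subgame is $\Lambda^{(\Yvec=\muvec)}_{\Phi_l}[\muvec]$, which is form~2. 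I would note here that the second refinement additionally augments the outer block with the copies $\Zvec^{(\Yvec=\muvec)}_{\Phi_l}$; since these carry the quantifier $\Q$, they merge into the single outer $\Q$-block, so the abstraction retains the shape $\Q(\text{outer vars}) : \{\text{subgames}\}$ and the invariant stays well-formed.

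The step I expect to require the most care is not a calculation but the bookkeeping: making precise that these two syntactic cases are exhaustive for any $\Phi_l$ entering \refine (because $\Phi_l$ begins with $\bar{\Q}\Yvec$, and by quantifier alternation what follows is either a bare constraint system or a $\Q\Zvec$ block with residual QIP $\Lambda$), and confirming that the witness $\muvec$ ranges over the first inner block $\Yvec$ rather than some deeper level, which follows from $\muvec$ being a winning move for $\bar{\Q}\Yvec \in \D_{\Yvec}: \Phi_l[\tauvec]$. Once this correspondence between the two branches of \refine and the two declared forms is pinned down, the induction closes immediately.
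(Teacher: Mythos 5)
Your proof is correct and matches the paper's justification: the paper states this as an observation following immediately from the fact that the abstraction starts empty and is modified only by \refine, whose two defining cases are exactly the two forms claimed. Your induction on the number of \refine invocations (with the bookkeeping that recursive calls use their own local abstractions and that $\muvec$ assigns precisely the first inner block $\Yvec$) is just an explicit formalization of that same argument.
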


\begin{lemma}
	If \rqis returns $\bot$ on multi-game $\forall \Xvec \in \D_{\Xvec}:\{ \Phi_1 \dots \Phi_n\}$ then this multi-game is won by the existential player. 
\end{lemma}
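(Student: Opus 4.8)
The plan is to read ``won by the existential player'' as the exact complement of ``won by the universal player'': since each subgame $\Phi_i[\tauvec]$ is a finite, perfect-information and hence determined QIP, the multi-game $\forall \Xvec \in \D_\Xvec : \{\Phi_1 \dots \Phi_n\}$ is won by $\exists$ if and only if for every $\tauvec \in \D_\Xvec$ there is some index $i$ with $\exists$ winning $\Phi_i[\tauvec]$, i.e. if and only if $\forall$ has \emph{no} winning move. So it suffices to show that, whenever \rqis outputs $\bot$, no $\tauvec \in \D_\Xvec$ wins all subgames for $\forall$. I would establish this as one case of a simultaneous induction on the number of quantifier blocks occurring in the subgames, proving the full soundness of \rqis (both possible outputs, for both outer quantifiers) at once; the present statement is the case ``outer quantifier $\forall$, output $\bot$'', and it invokes the other cases only on instances with strictly fewer blocks. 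Apart from the quantifier-free base case handled by \wins, \rqis returns $\bot$ only from Line~\ref{line::bot}, i.e. when $\findMove(\alpha) = \bot$ for the current abstraction $\alpha$, so everything reduces to analysing that final $\alpha$.

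\textbf{Base case.} If every subgame is quantifier free, $\findMove$ is the call $\wins(\forall \Xvec \in \D : \{\Phi_1 \dots \Phi_n\})$, which returns $\bot$ exactly when Problem~\eqref{prob::AllEval} is infeasible. By the construction justified in Section~\ref{sec::wins}, infeasibility of \eqref{prob::AllEval} is equivalent to the non-existence of an $\xvec \in \D$ that violates all constraint systems simultaneously; and such an $\xvec$ is precisely a move winning every (constraint-only) subgame for $\forall$. Hence $\bot$ here certifies that $\forall$ has no winning move, so $\exists$ wins.

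\textbf{Inductive step (the lifting claim).} For the general case I would prove: \emph{every move $\tauvec$ that wins the original multi-game for $\forall$ extends to a move that wins the final abstraction $\alpha$ for $\forall$.} By Observation~\ref{obs:abs}, each subgame $\Psi_j$ of $\alpha$ arises by fixing the first inner ($\exists$) block $\Yvec$ of some original subgame $\Phi_{l_j}$ to an assignment $\muvec_j$. Fix a hypothetical winning move $\tauvec$ for $\forall$ in the original game; then $\forall$ wins $\Phi_{l_j}[\tauvec]$ against \emph{every} existential response, in particular against $\Yvec := \muvec_j$. In the first form of Observation~\ref{obs:abs}, $\Phi_{l_j}[\tauvec]$ with $\Yvec := \muvec_j$ is a fully evaluated, violated constraint system, which is exactly $\Psi_j[\tauvec]$, won by $\forall$. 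In the second form, the residual winning strategy of $\forall$ for the block $\forall \Zvec$ and the deeper QIP $\Lambda$ transfers verbatim to the annotated copies $\Zvec^{(\muvec_j)}$ and $\Lambda^{(\muvec_j)}_{\Phi_{l_j}}$, yielding a winning continuation for $\Psi_j[\tauvec]$; the induced assignment to the copy variables is what extends $\tauvec$ to a move of $\alpha$. As the copy blocks of distinct subgames are disjoint and collected in the outer block of $\alpha$, a single combined outer move realises all these continuations at once. With the lifting claim in hand, $\findMove(\alpha) = \bot$ together with the induction hypothesis (soundness of \rqis on $\alpha$, whose subgames have strictly fewer blocks) gives that $\alpha$ has no winning $\forall$ move; by the contrapositive of the lifting claim the original game has no winning $\forall$ move, so $\exists$ wins it.

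\textbf{Main obstacle.} I expect the delicate step to be the transfer in the second form of Observation~\ref{obs:abs}: verifying that the annotated subgame $\Psi_j[\tauvec]$ is genuinely isomorphic, as a QIP game, to $\Phi_{l_j}[\tauvec]$ with $\Yvec := \muvec_j$, including that the universal copies $\Zvec^{(\muvec_j)}$ inherit the uncertainty-set constraints baked into $\D_\Zvec$ (so that legal plays correspond under the renaming) and that copy blocks for distinct $j$ stay independent, so one outer move of $\alpha$ can satisfy all subgames simultaneously. The remaining bookkeeping---that $\extract$ recovers $\tauvec$ from the abstraction move, and that the refinement of Section~\ref{sec::refine} indeed produces only the two forms listed in Observation~\ref{obs:abs}---is routine.
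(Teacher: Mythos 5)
Your proposal is correct and takes essentially the same route as the paper's proof: induction on the quantifier depth, a base case resting on the equivalence between infeasibility of Problem~\eqref{prob::AllEval} and the absence of a universal winning move, and an inductive step that combines Observation~\ref{obs:abs}, the disjointness of the copied blocks across subgames, and the renaming correspondence between the annotated copies and the original subgames instantiated at $\Yvec=\muvec$. Your ``lifting claim'' (a universal winning move $\tauvec$ of the original game extends to a universal winning move of the abstraction) is precisely the contrapositive of the paper's argument, which assumes the abstraction is won by the existential player and, via the same combination-of-copy-assignments idea inside a proof by contradiction, produces an existential countermove to every $\tauvec$; the mathematical content is identical.
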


\begin{proof}
	We prove this by induction on the quantifier depth.
	The base case is if $\Phi_1 \dots \Phi_n$ are quantifier free, and we solve integer linear program~\eqref{prob::AllEval} as \wins is invoked. Then, $\bot$ is returned if and only if the universal player is not able to violate the constraint systems of all subgames as argued in Section~\ref{sec::wins}. Therefore $\forall \Xvec \in \D_{\Xvec}:\{ \Phi_1 \dots \Phi_n\}$ is won by the existential player.

	Now suppose some $\Phi_j$ contains a quantifier. Then, we build an abstraction $\alpha$ of the multi-game and $\bot$ is only returned, if the call to \rqis in Line~\ref{line::find_tau} on the abstraction returns $\bot$. Note that the first call to Line~\ref{line::find_tau} on the empty abstraction always returns a move $\tauvec$, as $\D_{\Xvec}$ is always non-empty. Thus, a refinement of the abstraction must have led to a returned $\bot$.
	Hence, we have to show that a call to \rqis for a refined abstraction $\forall \Xvec\in \D_{\Xvec} \forall \Zvec^{(\Yvec=\muvec_1)}_{\Phi_{l_1}}\in \D_{\Zvec}  \dots \Zvec^{(\Yvec=\muvec_k)}_{\Phi_{l_k}}\in \D_{\Zvec} : \{\Psi_1 \dots \Psi_k\}$ returns $\bot$, only if the original formula is won by the existential player.
	We know that the abstraction is existentially feasible by induction hypothesis, in other words there is no assignment to the outer block that makes all subgames infeasible.
	We now argue that for any $\tauvec\in\D_{\Xvec}$, there must be one subgame that is won by the existential player. Otherwise 
	for each $\Phi_j$, $j\in[n]$ there exists an assignment $\tauvec_i\in \D_{\Zvec}$ for which $\Psi_i[\tauvec][\tauvec_i]$ is lost for the existential player. Then we can construct $\tauvec' = (\tauvec,\tauvec_{l_1},\ldots,\tauvec_{l_k})$. This is well defined because $\Xvec$ (corresponding to $\tauvec$) are the only shared outer variables between the subgames. $\tauvec'$ is also a winning move of the abstraction and in particular, each subgame $\Psi_i$ is won by the universal player, against our assumption.
	Therefore for each $\tauvec\in \D_{\Xvec}$ there is a subgame $\Psi_i$ that is won by the existential player under the assignment to the remaining outer variables. 
	From Observation~\ref{obs:abs}, subgames $\Psi_i$ can have one of the following structures:
	
	\begin{enumerate}
		\item $\Psi_i= A^\exists_{(-\Yvec)}\xvec_{(-\Yvec)} \leq \bvec^\exists_{\Yvec=\muvec}$ when $\Phi_{l_i}= \exists \Yvec\in \D_{\Yvec}: A^\exists \xvec \leq \bvec^\exists$, or 
		\item $\Psi_i= \Lambda^{(\Yvec=\muvec)}_{\Phi_{l_i}}[\muvec]$ when $\Phi_{l_i}= \exists \Yvec\in \D_{\Yvec} \forall \Zvec\in \D_{\Zvec}: \Lambda$, where $\Lambda$ itself is a QIP.
	\end{enumerate}
	We argue that one of these being feasible means that its associated $\muvec$ is countermove to $\tauvec$ in the original game. 
	If $\tauvec$ allows $\Phi_{l_i}$, then $\Phi_{l_i}$ is in the original games.
	In the first case, $\Psi_i[\tauvec]= A^\exists_{(-\Xvec-\Yvec)}\xvec_{(-\Xvec-\Yvec)} \leq \bvec^\exists_{\Xvec=\tauvec \Yvec=\muvec}$ is feasible so $\muvec$ is a countermove in the original game that satisfies $\Phi_{l_i}[\tauvec]$.
	In the second case  $\Lambda^{(\Xvec=\tauvec, \Yvec=\muvec)}_{\Phi_{l_i}}[\tauvec, \muvec]$ is feasible so $\muvec$ must be a countermove in the original game that satisfies $\Phi_{l_i}[\tauvec]$. Therefore every $\tauvec$ has an existential countermove.
\end{proof}

In the existential case, we can understand soundness through the theoretical existence of proofs. We unfortunately cannot extract proofs at this stage, as we would need to extract cutting planes proofs from \texttt{Gurobi} which is not supported.

\begin{lemma}
	If \rqis returns $\bot$ on multi-game $\exists \Xvec \in \D_{\Xvec}:\{ \Phi_1 \dots \Phi_n\}$  then there is a $\forall$Exp+Cutting Planes refutation of $\exists \Xvec \in \D_{\Xvec}. \Phi_1 \wedge \dots \wedge \Phi_n$. 
\end{lemma}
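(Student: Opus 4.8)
The plan is to mirror the structure of the preceding universal-case lemma, but now carrying along an explicit proof object rather than a game-theoretic witness, and to argue by induction on the quantifier depth of the multi-game. The induction hypothesis states that whenever \rqis returns $\bot$ on an existentially-led multi-game of strictly smaller depth, a $\forall$Exp+Cutting Planes refutation of the corresponding conjunction exists.

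\emph{Base case.} When every $\Phi_l$ is quantifier free and $\Q=\exists$, the call resolves to \wins, which solves the integer program~\eqref{prob::ExistEval}; returning $\bot$ means this IP---the conjunction of all subgame constraint systems over $\D$---is infeasible. Since no universal variables remain at this point, the completeness of Cutting Planes already invoked in Theorem~\ref{thm::ProofSystem} supplies a Cutting Planes refutation, which is literally a $\forall$Exp+Cutting Planes refutation obtained with the trivial (empty) universal expansion.

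\emph{Inductive step.} If some $\Phi_j$ contains a quantifier, then $\bot$ is returned only when the recursive \rqis call on a refined abstraction $\alpha$ returns $\bot$. By Observation~\ref{obs:abs} and the definition of \refine, $\alpha$ has the form $\exists \Xvec\in\D_{\Xvec}\,\Zvec^{(\muvec_1)}_{\Phi_{l_1}}\dots\Zvec^{(\muvec_k)}_{\Phi_{l_k}}\in\D_{\Zvec}:\{\Psi_1\dots\Psi_k\}$, whose outer block is again existential and whose subgames $\Psi_i=\Lambda^{(\muvec_i)}_{\Phi_{l_i}}[\muvec_i]$ have strictly smaller quantifier depth. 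The induction hypothesis yields a $\forall$Exp+Cutting Planes refutation $\pi$ of $\exists\Xvec\,\Zvec^{(\muvec_1)}\dots\Zvec^{(\muvec_k)}.\ \Psi_1\wedge\dots\wedge\Psi_k$. The key \emph{lifting step} is then to reinterpret $\pi$ over the original game: each axiom line of $\pi$ is a constraint of some $\Psi_i$ in which the universal variables internal to $\Lambda^{(\muvec_i)}$ have been fixed by a complete assignment $\sigma_i$ respecting the annotated universal constraints carried into $\D_{\Zvec}$ by \refine. I claim that the concatenation $(\muvec_i,\sigma_i)$ is a complete universal assignment for $\Phi_{l_i}$ satisfying the original universal constraint system, since $\muvec_i$ is a legitimate countermove (hence lies in $\D_{\Yvec}$ obeying the uncertainty set) and $\sigma_i$ is constrained exactly by the residual universal constraints. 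Thus each axiom of $\pi$ is already an axiom of the $\forall$Exp+Cutting Planes system for $\exists\Xvec.\ \Phi_1\wedge\dots\wedge\Phi_n$ (the existential copies $x_k^{[\cdot]}$ carry precisely the left-annotations required), and replaying the identical Addition/Multiplication/Division steps of $\pi$ produces the desired refutation.

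\emph{Main obstacle.} The delicate part is the correctness of the lifting step under polyhedral uncertainty: confirming that the universal assignments appearing in the abstraction's refutation recombine into complete assignments satisfying the \emph{original} universal constraint system, and that the variable annotations align exactly with the $\forall$Exp axiom convention so that no CP inference of $\pi$ is invalidated under reinterpretation. One must also verify that the induction measure genuinely decreases---that absorbing the expanded existential copies $\Zvec^{(\muvec_i)}$ into the outer block strictly reduces quantifier depth---which hinges on $\Lambda$ carrying one fewer quantifier alternation than $\Phi_{l_i}$.
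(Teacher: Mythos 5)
Your proposal follows essentially the same route as the paper's proof: induction on quantifier depth, the base case discharged by refutational completeness of Cutting Planes on Problem~\eqref{prob::ExistEval}, and an inductive step that takes the refutation $\pi$ of the abstraction and lifts each axiom back to the original formula by combining the countermove $\muvec_i$ with the inner universal assignment (the paper's $\muvec \sqcup \betavec$), checking that the combined assignment respects the universal constraint system and that the annotated existential copies introduced by \refine match the $\forall$Exp axiom convention. The paper carries out your ``lifting step'' in more detail---via an explicit renaming $\pi\to\pi'$ of inner existential variables and a two-case analysis mirroring Observation~\ref{obs:abs}---but the key claims you identify, including the delicate points you flag as the main obstacle, are exactly the ones the paper verifies.
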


\begin{proof}
	We can prove this via induction on the quantifier depth of $\exists \Xvec \in \D_{\Xvec}:\{ \Phi_1 \dots \Phi_n\}$.
	
	For the base case, if all $\Phi_1 \dots \Phi_n$ are quantifier free, \wins is invoked. Given the outer quantifier is $\exists$, we solve the integer program~\eqref{prob::ExistEval}, which is the conjunction of all constraint systems $\Phi_1 \dots \Phi_n$. Since cutting planes is a complete refutational system for Integer Programming~\cite{chvatal1984cutting} there is a cutting planes proof of this refutation. 
	
	
	Now consider the inductive step. Suppose some $\Phi_i$  has a quantifier. Then \rqis goes into the CEGAR loop and only returns $\bot$ if the abstraction $\alpha$ returns $\bot$.  We have to show if the abstraction $\exists \Xvec\in \D_{\Xvec} \exists \Zvec^{(\Yvec=\muvec_1)}_{\Phi_{l_1}}\in \D_{\Zvec}  \dots \exists \Zvec^{(Y=\muvec_k)}_{\Phi_{l_k}}\in \D_{\Zvec} : \{\Psi_1 \dots \Psi_k\}$ returns $\bot$ then we can construct a $\forall$Exp+Cutting Planes refutation of the original formula.
	
	By assuming the induction hypothesis we have an $\forall$Exp+Cutting Planes refutation $\pi$ of $\exists \Xvec\in \D_{\Xvec} \exists \Zvec^{(Y=\muvec_1)}_{\Phi_{l_1}}\in \D_{\Zvec}  \dots \exists \Zvec^{(Y=\muvec_k)}_{\Phi_{l_k}}\in \D_{\Zvec} : \{\Psi_1 \dots \Psi_k\}$.
	Note that the quantified variables in each $\Psi_1 \dots \Psi_k$ are different. The only variables they share are in the outer block variables $\Xvec$.

	Consider an individual axiom step $C$ of $\pi$, which involves taking a constraint from $\Psi_i$, and a complete assignment $\betavec$ to the universal variables of $\Psi_i$ that satisfies the universal constraint system. 
	Note that $\Psi_i$ can only be a subgame in the abstraction for one of two reasons:
	
	\begin{enumerate}
		\item $\Psi_i= A^\exists_{(-\Yvec)}\xvec_{(-\Yvec)} \leq \bvec^\exists_{\Yvec=\muvec}$ when $\Phi_{l_i}= \forall \Yvec\in \D_{\Yvec}: A^\exists \xvec \leq \bvec^\exists$, when $\muvec \in \D_{\Yvec}$.
		\item $\Psi_i= \Lambda^{(\Yvec=\muvec)}_{\Phi_{l_i}}[\muvec]$ when $\Phi_{l_i}= \forall \Yvec\in \D_{\Yvec} \exists \Zvec \in \D_{\Zvec}: \Lambda$, when $\muvec \in \D_{\Yvec}$.
	\end{enumerate}
	
	In each case, for every $\Psi_i$ there is a $\Yvec=\muvec$ statement that corresponds to it. 
	We take $\pi$ and rename all the existential variables appearing in the inner blocks to create $\pi'$, we rename $w$ appearing in $\Psi_i$ to be $w^{(\Yvec=\muvec)}_{\Psi_i}$.
	
	In the first case, we can take the single row $j$ from $A^\exists_{(-\Yvec)}\xvec_{(-\Yvec)} \leq \bvec^\exists_{\Yvec=\muvec}$ that was combined with $\betavec$ to get $C$. Note that $\betavec$ must be empty, because there are no universals left.
	We take the same row $j$ of $A^\exists \xvec \leq \bvec^\exists$ and combine it with $\muvec$ (which satisfies the uncertainty set) in an axiom step from our original formula. Then $A_j^\exists \xvec \leq \bvec^\exists_j$ instantiates to $(A^\exists_{(-\Yvec)})_j\xvec_{(-\Yvec)} \leq (\bvec^\exists_{(\Yvec=\muvec)})_j$, exactly the same as $C$ and the transformation in $\pi'$ does not change this, because there are no inner existential variables.
	
	In case 2, the inner part of $\Lambda^{(\Yvec=\muvec)}_{\Phi_{l_i}}$ must have a row $j$ that appears as constraint $$\displaystyle\sum_{k\in [n]:\ Q_k=  \exists} a^\exists_{j,k} x_k + \displaystyle\sum_{k\in [n]:\ Q_k=  \forall} a^\exists_{j,k} x_k \leq (\bvec^\exists_{(\muvec)})_j$$ that combines with $\betavec$ to get axiom $$\displaystyle\sum_{k\in [n]:\ Q_k=  \exists} a^\exists_{j,k} x^{[\betavec]}_k + \displaystyle\sum_{k\in [n]:\ Q_k=  \forall} a^\exists_{j,k} \betavec(x_k) \leq (\bvec^\exists_{(\muvec)})_j.$$ In $\pi' $ we can write it as 
	
	$$\displaystyle\sum_{\substack{k\in [n]\\x_k \in \Xvec}} a^\exists_{j,k} x_k + \displaystyle\sum_{\substack{k\in [n]\\ x_k \in \Zvec^{(\Yvec=\muvec)}_{\Phi_{l_i}}}} a^\exists_{j,k} x_k + \displaystyle\sum_{\substack{k\in [n]:\ Q_k=  \exists\\x_k \text{ inner}}} a^\exists_{j,k} {x_k}_{\Phi_{l_i}}^{(\Yvec=\muvec)[\betavec]} + \displaystyle\sum_{k\in [n]:\ Q_k=  \forall} a^\exists_{j,k} \betavec(x_k) \leq (\bvec^\exists_{(\muvec)})_j .$$
	
	But notice that all $x_k\in \Zvec^{(\Yvec=\muvec)}_{\Phi_{l_i}}$ match the renaming annotation in $\pi'$. Simplifying again gets us
	$$\displaystyle\sum_{\substack{k\in [n]\\ x_k \in \Xvec}} a^\exists_{j,k} x_k+ \displaystyle\sum_{\substack{k\in [n]:\ Q_k=  \exists\\ x \notin \Xvec}} a^\exists_{j,k} {x_k}_{\Phi_{l_i}}^{(\Yvec=\muvec)[\betavec]}  \leq (\bvec^\exists_{(\muvec\sqcup\betavec)})_j. $$
	
	We now take the same row in the inner part of $\Lambda$ and combine it with $\muvec \sqcup \betavec$ to get the axiom 
	$$\displaystyle\sum_{k\in [n]:\ Q_k=  \exists} a_k x^{[\muvec \sqcup\betavec]}_k + \displaystyle\sum_{k\in [n]:\ Q_k=  \forall} a_k \betavec(x_k) \leq (\bvec^\exists)_j .$$
	Note that  $\muvec \sqcup \betavec$ is a disjoint union because $\Yvec$ is already assigned in $\Psi_i$, and each universal block satisfies its domain given by bounds and universal constraint system. Subtracting both sides ends us with the axiom exactly as it was in $\pi'$, because $\Xvec$ variables are not changed under $[\muvec \sqcup \betavec]$ and the $\muvec$ annotation is already present in the $\pi'$  proof. $\pi'$ is therefore a refutation in the original formula.
\end{proof}

\subsection{Optimization}
\subsubsection{Optimization Method 1: Binary Search}
In Section~\ref{sec:prelim}, we introduced the QIP optimization problem, for which a search-based solution approach exists \cite{hartisch2022general}. We now want to utilize the presented expansion-based approach, to obtain another solution tool for the optimization problem. To this end, we assume that a linear objective is given with objective coefficients $\cvec\in\mathbb{Z}^n$. For clarity of presentation, we assume an existential starting player in this case. We have already drawn the connection between the QIP optimization problem and its decision version, where the objective function is moved to the constraints and one asks for the existence of a solution with objective value less than or equal to some value $z$. As all variables are bounded and the objective value only attains integer values, we can compute lower and upper bounds on the objective value. In particular, for the optimal objective value we know $z^\star \in [\min_{\xvec\in\D} \cvec^\top\xvec,\max_{\xvec\in\D} \cvec^\top\xvec]$. Now, we can conduct a binary search to close in on the optimal value as shown in Algorithm~\ref{alg:opt}.

\begin{algorithm}[htb]
\begin{algorithmic}[1]
\Require{QIP optimization problem}
\Ensure{$\bot$, if instance is infeasible and otherwise the optimal objective value.}
\State $LB \gets \min_{\xvec\in\D} \cvec^\top\xvec$
\State $UB \gets \max_{\xvec\in\D} \cvec^\top\xvec$ 
\State $z\gets UB$
\State run \rqis on QIP decision problem with additional constraint $\cvec^\top \xvec \leq z$
\If{$\bot$}
\Return \textit{``Instance is infeasible''}
\EndIf
\While {$UB-LB>0$}
\State $z\gets (LB+UB)/2$
\State run \rqis on QIP decision problem with additional constraint $\cvec^\top \xvec \leq z$
\IfThenElse{feasible}{$UB\gets z$}{$LB\gets z+1$}
\EndWhile
\State \Return $UB$
\end{algorithmic}
\caption{Optimization Framework utilizing \rqis\label{alg:opt}}
\end{algorithm}

\subsubsection{Optimization Method 2: Mixing Methods\label{sec::mixing}}
Repeatedly running the solver in a binary search can end up being more expensive than necessary.
Our working hypothesis is that the existing search-based solver \yasol \cite{hartisch2022general} is good at finding solutions, but slow at verifying optimality, while \rqis suffers from the opposite problem: it is much better at showing inconsistencies from objective value bounds that are too tight, but slow at finding good initial solutions.
Therefore we propose to combine the approaches, by letting \yasol search for good solutions and use \rqis to verify optimality. In particular, every time \yasol finds a new solution---an improved value for $UB$---we can call \rqis and try to verify that no solution with objective value at most $UB-1$ exists by adding the respective constraint on the objective function. Then, if and only if, $UB$ is optimal, no winning move for the existential player will be found. Doing this sequentially, i.e., stopping the search process of \yasol while \rqis tries to verify optimality, of course can be detrimental, as in the early phase of the optimization process, \yasol tends to find better solutions quickly. Thus, calling \rqis for every newly found solution can become inefficient. 
We instead think of initiating a parallel process of \rqis while \yasol continues its search. If \rqis certifies that no better solution exists, \yasol can terminate early. Conversely, if \yasol finds a new solution, the current \rqis process can be terminated and restarted with the updated bound.

It is noteworthy, that if \rqis returns a winning move with an objective function bounded by $z$, it cannot be concluded that a solution with objective value $z$ exists, but only that a solution with some value less than or equal to $z$ exists. Hence, \yasol cannot extract a newly found solution from a call to \rqis that does not verify optimality. 

In this combined approach we also see options to exploit further synergies, where not only \yasol profits from \rqis, but also the other way around. When \yasol continues the search process after finding a new incumbent solution, it would incorporate all information it has gathered during the solution process. Running \rqis with a new bound on the objective function, on the other hand, is essentially starting again from scratch, which we want to avoid. Since constraints are learned during the search process of \yasol, adding them to \rqis is one way to transfer learned information from one solver to the other. Any learned constraint holds information on what variable assignments will not lead to a (improved) solution. It is worth mentioning that \yasol does not explicitly track all its progress through learning constraints, but also implicitly through branching decisions. But each branch of the search tree that is completed without a found (better) solution can be interpreted as a learned constraint. Adding such constraints to the verification instance, has the potential to improve the runtime of \rqis, with the risk of increasing the size of the instance too much, making it harder for the underlying IP solver to solve Problems~\eqref{prob::ExistEval} and \eqref{prob::AllEval}. This issue has similarities to the selection of cutting planes for solving integer programs (see, e.g., \cite{dey2018theoretical}) and further research needs to be done for our special case.

\begin{theorem}
Let $\Pi \phi$ be a QIP with $\Pi$ a quantifier prefix and $\phi$ an IP and assume \yasol is a correct clause learning algorithm for QIP and also complete, in that it will eventually find the optimal solution.  Suppose \yasol learns clauses $C_1 \dots C_n$ on the way to learning a solution with objective function value of $v$. Then $\Pi \phi\wedge F$ is false if and only if $\Pi \phi \wedge C_1 \dots C_n \wedge D \wedge F $ is false, where $F$ is a constraint saying the objective function is strictly less than $v$.

\end{theorem}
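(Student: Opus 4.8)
The plan is to reduce both directions to two facts about \yasol's clause learning: (i) adding constraints to the existential system can only shrink the set of winning existential strategies, and (ii) a \emph{correct} learning procedure only records clauses that are semantically entailed by the instance it is solving, so that no winning existential strategy is ever eliminated. Throughout I read ``$\Pi\psi$ is false'' as ``the existential player has no winning strategy for the QIP $\Pi\psi$,'' and I regard the learned clauses $C_1,\dots,C_n$ together with $D$ as additional rows appended to the existential constraint system. Write $G_0 = \Pi\,\phi\wedge F$ and $G_1 = \Pi\,\phi\wedge C_1\wedge\cdots\wedge C_n\wedge D\wedge F$, so that $G_1$ is obtained from $G_0$ purely by appending constraints.

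For the forward implication I would argue by monotonicity. Fix any strategy $S$ for the existential player. Along every play consistent with $S$, satisfying the constraint system of $G_1$ implies satisfying that of $G_0$, since $G_1$ contains every constraint of $G_0$. Hence every winning strategy for $G_1$ is a winning strategy for $G_0$; contrapositively, if $G_0$ has no winning strategy then neither does $G_1$, which is exactly the implication ``$G_0$ false $\Rightarrow$ $G_1$ false.''

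The backward implication is where the hypothesis that \yasol is a correct clause-learning algorithm does the work. The clauses $C_1,\dots,C_n$ and $D$ are generated while \yasol searches for a solution of objective value $v$, i.e.\ under the very bound encoded by $F$; correctness of the learning rule guarantees that each such clause is entailed by $\Pi\,\phi\wedge F$, in the sense that it is satisfied by the play produced by \emph{any} winning existential strategy for $G_0$ and, more strongly, that appending it removes no winning existential strategy. Granting this invariant, suppose $G_0$ were true, witnessed by a winning strategy $S$. By the invariant $S$ also satisfies $C_1,\dots,C_n$ and $D$ on every play, so $S$ is a winning strategy for $G_1$, contradicting the assumed falseness of $G_1$. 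Therefore $G_1$ false forces $G_0$ false, and combining the two implications yields the stated equivalence.

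The main obstacle I anticipate is making precise, and then discharging, the soundness invariant used in the backward direction: one must spell out what ``correct clause learning algorithm for QIP'' guarantees and verify that the clauses learned \emph{en route} to value $v$ are valid specifically under the objective bound $F$ rather than for $\Pi\,\phi$ alone. The two must be bundled together, since a clause that prunes assignments because they cannot beat $v$ need not be entailed by $\Pi\,\phi$ without $F$; this is precisely why $F$ appears on both sides of the equivalence. Once the invariant is stated as a property of the assumed-correct learner, the remainder is the short monotonicity/soundness dichotomy above, and the role of $D$ is merely to stand for any further sound (e.g.\ branch-derived) constraint, treated uniformly with the $C_i$.
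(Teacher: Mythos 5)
Your proof is correct at the same level of rigor as the paper's own (both must lean on the informal hypothesis about \yasol), and your forward direction---monotonicity of appending existential constraints---is exactly the paper's. For the backward direction, however, you take a genuinely different route. You argue semantically: a \emph{correct} learner only records constraints entailed by $\Pi\phi\wedge F$, so no winning existential strategy is eliminated, and hence a winning strategy for $\Pi\phi\wedge F$ would survive as a winning strategy for the augmented QIP, contradicting its falseness. The paper argues operationally via \emph{completeness and restarts}: if $\Pi\phi\wedge F$ were true, a solution with value strictly less than $v$ would exist; \yasol, being complete, must find it even if it restarts after learning $C_1,\dots,C_n$---but a restarted \yasol is then effectively solving $\Pi\phi\wedge C_1\wedge\cdots\wedge C_n\wedge F$, which is false, a contradiction. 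Your version buys a cleaner, solver-independent argument in which the restart mechanism never appears, at the price of a stronger invariant (preservation of \emph{all} winning strategies), which you rightly flag as the point needing justification; note that retreating to the weaker invariant ``some winning strategy survives'' would be circular, since that is precisely the backward direction, so your per-clause, strategy-preserving formulation is what keeps the argument sound. The paper's version needs only that \yasol remains complete when it keeps its learned clauses across a restart, but it tacitly assumes that such restarting is a legitimate mode of operation. A final point in your favor: you give the constraint $D$ (which appears in the theorem statement but is silently dropped in the paper's proof) an explicit reading as an arbitrary further sound constraint handled uniformly with the $C_i$.
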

\begin{proof}
Suppose \yasol has learned clauses $C_1 \dots C_n$ and a solution with value $v$. Let us consider the QIP feasibility problem $\Pi \phi \wedge C_1 \dots C_n \wedge F$. This QIP is either true, resulting in the existence of a strategy with value less than $v$, or there is no such strategy, rendering the QIP false. If $\Pi \phi \wedge C_1 \dots C_n \wedge F$ is true, then obviously $\Pi \phi \wedge F$ also must be true, as it contains less existential constraints. Now assume $\Pi \phi \wedge C_1 \dots C_n \wedge F$ is false. Assume $\Pi \phi \wedge F$ is true, which means that there exists a solution with objective value strictly less than $v$. As \yasol is complete, it eventually has to find this solution. However, $\Pi \phi \wedge C_1 \dots C_n \wedge F$ being false, implies that \yasol can no longer find the solution if it restarted after learning $C_1 \dots C_n$ (which is an option for \yasol to perform after clause learning). Consequently,  $\Pi \phi \wedge F$ also must be false.
\end{proof}

\section{A New Challenging Problem Class: QRandomParity\label{sec::QRandomParity}}
\subsection{Motivation}
QRandomParity is a combination of QParity, which are known to be hard for QCDCL based QBF solver \cite{BCJ19}, and RandomParity which are hard for CDCL based SAT solvers \cite{chew2024hardness}.

Given an integer $n$ and a random permutation $\sigma$ on $[n]$. Consider the Quantified Boolean problem 
\begin{align*}
\exists x_1 \dots x_n \forall z \exists t_2 \dots t_n \exists s_2 \dots s_n. \ & t_2=(x_1 \oplus x_2)\wedge \dots \wedge t_i=(t_{i-1}\oplus x_i),\dots \wedge\\
&s_2=(x_{\sigma(1)} \oplus x_{\sigma(2)}) \wedge \dots \wedge s_i=(s_{i-1} \oplus x_{\sigma(i)}),\dots \wedge\\
&(z \rightarrow \neg t_n)\wedge(\neg z \rightarrow s_n)
\end{align*}

Both $t_n$ and $s_n$ compute the parity of the $x$ variables but use a different ordering. In particular, $t_n=s_n$ must be fulfilled. If we take the full expansion we get a contradiction, because parity is associative and commutative. These families have been shown hard for CDCL solvers like \cadical in both theory and experiments \cite{chew2024hardness}. This is because of the standard encodings of the parity problems into clauses. Typically one encodes $a= (b \oplus c)$ using four clauses ($\neg a \vee b \vee c)$, $(a \vee \neg b \vee c)$, $(a \vee b \vee \neg c)$, and $(\neg a \vee \neg b \vee \neg c)$.

Note that, extension variables can be used to produce formulas easy for resolution. In this case its has been shown \cite{chew2020sorting} that only $O(n \log n)$ many extension variables are needed before we can get linear size resolution refutation. 
In pseudo-boolean constraints the extension variables come more naturally into a standard encoding of parity, but require a new variable $e$ for each constraint. Therefore $a= (b \oplus c)$ can be represented by the  constraint
$a+b+c=2e$.

This encoding allows a short cutting planes proof. We simply can add all constraints 
to get $x_1 + \dots +x_n + t_n +2t_2 \dots +2t_{n-1} - 2e_2 \dots -2e_n = 0$
 and this can be repeated for $x_1 + \dots +x_n + s_n +2s_2 \dots +2s_{n-1}- 2f_2 \dots -2f_n = 0$.
 
 Subtracting one from the other we get $t_n- s_n + 2 \sum_{i=2}^n \left( t_i -e_i - (s_i - f_i) \right)  = 0$.  The only integer solutions to this equality is to have $t_n=s_n$, as otherwise an odd number would be on the left-hand side of the constraint, and cutting planes finds this via the division rule \cite{gocht2021certifying}. The idea is, that IP solvers, which have access to several preprocessing techniques that utilize constraint aggregation (e.g., see \cite{glover1968surrogate,marchand2001aggregation,wolsey1999integer}), may  be able to handle this as well.
  
 \subsection{ Encodings\label{app::QRP_formulation}}
 
 Given a random permutation $\sigma$ on $[n]$. Then the QRandomParity problem in clausal form (as QBF) is given by the quantification sequence
 $\exists x_1 \dots x_n \forall u \exists t_2 \dots t_n, s_1 \dots s_n$ (with all Boolean variables) and matrix 
 \begin{subequations}
 	\label{prob::QRandomParityQBF}
 	{\small
 		\begin{align}
 			&(\bar{x}_1 \vee x_2 \vee t_2) \ (x_1 \vee \bar{x}_2 \vee t_2)\  (x_1 \vee x_2 \vee \bar{t}_2) \ (\bar{x}_1 \vee \bar{x}_2 \vee \bar{t}_2) \label{eq::QRP_cl1} \\
 			&(\bar{x}_i \vee t_{i-1} \vee t_i) \ (x_i \vee \bar{t}_{i-1} \vee t_{i})\ (x_i \vee t_{i-1} \vee \bar{t}_i) \ (\bar{x}_i \vee \bar{t}_{i-1} \vee \bar{t}_{i})  \text{ for } i=3 \text{ to } n\\
 			&(\bar{x}_{\sigma(1)} \vee x_{\sigma(2)} \vee s_2) \ (x_{\sigma(1)} \vee \bar{x}_{\sigma(2)} \vee s_2)\ (x_{\sigma(1)} \vee x_{\sigma(2)} \vee \bar{s}_2) \ (\bar{x}_{\sigma(1)} \vee \bar{x}_{\sigma(2)} \vee \bar{s}_2)  \\
 			&(\bar{x}_{\sigma(i)} \vee s_{i-1} \vee s_i) \ (x_{\sigma(i)} \vee \bar{s}_{i-1} \vee s_{i})\  (x_{\sigma(i)} \vee s_{i-1} \vee \bar{s}_i) \  (\bar{x}_{\sigma(i)} \vee \bar{s}_{i-1} \vee \bar{s}_{i})  \text{ for } i=3 \text{ to } n \label{eq::QRP_cl4}\\
 			&(\bar{u} \vee \bar{t}_n)\ ({u} \vee {s}_n),
 		\end{align}
 	}
 \end{subequations}
 where each \eqref{eq::QRP_cl1}-\eqref{eq::QRP_cl4} encodes an XOR relation. In linear form (as QIP), all variables are binary and their sequence is given by
 $\exists x_1 \dots x_n \forall u \exists t_2 \dots t_n, d_2 \dots d_n, s_1 \dots s_n, e_2 \dots e_n$, where a linear encoding of the XOR relation is used, needing auxiliary variables $d$ and $e$:
 \begin{subequations}
 	\label{prob::QRandomParityQIP}
 	\begin{align}
 		{x}_1 + x_2 + t_2= 2 d_2 && {x}_{\sigma(1)} + x_{\sigma(2)} + s_2=2 e_2\\
 		t_{i-1} + {x}_i  + t_i= 2 d_i  &&
 		s_{i-1} + {x}_{\sigma(i)}  + s_i= 2 e_i &&\text{ for } i\in \{3,\ldots,n\}\\
 		-u- t_n \geq -1 &&
 		u+ s_n \geq 1,
 	\end{align}
 \end{subequations}

\section{Experimental Evaluation}
For our experiments, we compiled \rqis with \gurobi 12.0 and \yasol, which utilizes a linear programming solver, is compiled with \cplex 22.1. When installing \qfun with the respective SAT solvers, we used the provided script and we used \depqbf 6.01 and the latest version of \zthree.

\subsection{QRandomParity\label{sec::QRP_exp}}
As argued in Section~\ref{sec::QRandomParity} we expect our expansion-based solver to perform well on instances of type QRandomParity,  compared to search-based algorithms like \yasol \cite{hartisch2022general} and \depqbf \cite{lonsing2017depqbf}. We also tested the solver \zthree \cite{de2008z3}, which is capable of handling such instances. Furthermore, we are interested in the comparison of \rqis against state-of-the-art expansion-based solvers from the QBF community such as \qfun \cite{janota2018towards}. \qfun allows the integration of several SAT solvers, and we compiled it once with \cadical \cite{BiereFallerFazekasFleuryFroleyks} and once with \cms \cite{soos2009extending}, which we refer to as \qfunmini and \qfuncms, respectively. Here we expect \qfuncms to perform better on QRandomParity instances, due to the better handling of XOR clauses by \cms compared to other SAT solvers. For \yasol and \rqis, we use the QIP Encoding~\eqref{prob::QRandomParityQIP}, while for the other solvers we use QBF Encoding~\eqref{prob::QRandomParityQBF}. Notably, we tested both encodings with \zthree and observed that it performed better on the QBF formulation than on the QIP formulation. Consequently, we report only its performance on Encoding~\eqref{prob::QRandomParityQBF}. 

For each of the following experiments, we created $100$ instances for each $n$ (only varying in the random permutation of the variables). Experiments were conducted on an AMD Ryzen 9 5900X processor (3.70 GHz) with 128 GB RAM, imposing a $1800$ seconds time limit per instance and restricting each process to a single thread. In a first experiment, for $n\in\{10,12,\ldots,26\}$ we compare all solvers and show the results in a Cactus plot in Figure~\ref{fig::cactus}.
\begin{figure}[htb]
\centering
\includegraphics[width=.65\textwidth]{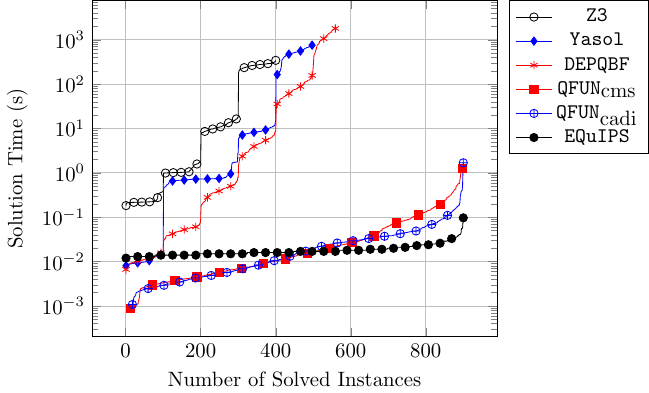}
\caption{Cactus plot. Algorithms that appear further to the right and closer to the bottom solve more instances faster, indicating better performance.\label{fig::cactus}}

\end{figure}
As expected, we can observe the expansion-based solvers outclass the search-based solvers \depqbf and \yasol as well as \zthree. \qfunmini and \qfuncms basically have the same performance, while our approach has the most constant behavior, outperforming all solvers for $n\geq 22$. 

For even larger values of $n$, \qfunmini quickly reaches its limit, incapable of solving these instances before timeout, while \qfuncms still can solve such instances easily. In particular, for $n=100$, no instance was solved by \qfunmini before the timeout. This is likely due to the special handling of XOR formulations by \cms, that \cadical lacks. Hence, for larger values of $n$ we restrict our comparison to \qfuncms and \rqis, as shown in Table~\ref{tab::QRPLargeN}.
\begin{table}[htb]
\centering
\caption{Median runtimes and (if existent) number of not solved QRandomParity instances.  \label{tab::QRPLargeN}}
\begin{tabular}{lllllllllll}
\toprule
$n$&100&200&300&400&500&600&700&800&900&1000\\\midrule
\rqis &0.06& 0.18 & 0.37 & 0.68 & 0.99 & 1.31 & 1.72 & 2.13 & 2.64 & 3.44/10\\
\qfuncms&  1.67 & 2.21 & 2.09 & 2.18 & 2.37 & 2.13/1 &2.51/2 & 2.22/3 & 3.04/4& -/100\\\bottomrule
\end{tabular}
\end{table}

For an increasing value of $n$, our approach consistently outperforms \qfuncms. However, for very large $n$, both solvers sometimes fail to return a solution within the time limit. This is quite surprising, as runtime trends did not indicate a sharp increase in solution times. We further investigated this by analyzing the behavior of \gurobi on the fully expanded problem for large values of $n$. It became evident that for smaller instances, \gurobi could solve them entirely during preprocessing, without initiating a search. However, for larger instances, preprocessing terminated prematurely before infeasibility was detected, forcing the solver into a search phase from which it never returned. For some instances, we were able to fine tune \gurobi parameters, but not to the extend of solving instances with $n\geq 1500$. We suspect a similar phenomenon occurs with \cms. These observations align with our hypothesis that aggregation techniques efficiently prove infeasibility for smaller instances, but as the problem size grows, identifying the right constraints for aggregation becomes increasingly difficult. This, in turn, leads \gurobi to halt preprocessing prematurely and initiate an exhaustive search instead.

For all experiments, \qfun and \depqbf were given instances in the \qdimacs file format. Surprisingly, testing the \qcir format---where we expected \qfuncms to perform even better---resulted in worse performance. Additionally, we evaluated \rqis on the QBF formulation of QRandomParity, where all constraints are clauses. 
While \rqis could still solve instances of size $100$ within seconds, it failed to solve any instance of size 200. This further supports our claim that leveraging the modeling capabilities of linear constraints can be advantageous.

\subsection{Multilevel Critical Node Problem\label{sec::MCN}}
To evaluate our approach on optimization instances, we consider the multilevel critical node problem (MCN) as introduced in \cite{baggio2021multilevel}. Given a directed graph $G=(V,E)$. Two agents act on $G$: The \textit{attacker} selects a set of nodes she wants to infect and the \textit{defender} tries to maximize the number of saved nodes. The defender can \textit{vaccinate} nodes before any infection occurs and \textit{protect} a set of nodes after the attack. An infection triggers a cascade of further infections that propagates via the graph neighborhood, only stopped by vaccinated or protected nodes. For each action (vaccination, infection, protection) a budget ($\Omega$, $\Phi$, $\Lambda$, respectively) exists limiting the number of chosen nodes. For any node $v \in V$ binary variables $z_v$, $y_v$, and $x_v$ are used to indicated its vaccination, infection, and protection, respectively.  Variables $\alpha_v \in \{0,1\}$ indicate whether node $v\in V$ is saved eventually. Only the variables $\pmb{y}$ are universally quantified. Their domain is restricted by a budget constraint, i.e., 
$\U_{\pmb{y}} =\{\pmb{y}\in \{0,1\}^V \mid \sum_{v \in V} y_v \leq \Phi\}$.
Hence, the universal constraint system only contains the single budget constraint. A QIP with polyhedral uncertainty set can be stated as follows:
\begin{subequations}
\label{Model::MCN::QIP}
\begin{align}
\max_{\pmb{z} \in \{0,1\}^V} \min_{\pmb{y} \in \U_{\pmb{y}}} \max_{\substack{\pmb{x} \in \{0,1\}^V\\ \pmb{\alpha} \in \{0,1\}^V}}& \sum_{v \in V} \alpha_v\\
\textnormal{s.t. } \exists \pmb{z} \in \{0,1\}^V \ \forall \pmb{y} \in \U_{\pmb{y}}\  \exists \pmb{x} \in \{0,1\}^V\ \pmb{\alpha} \in \{0,1\}^V: \span \span\\
&\sum_{v \in V} z_v \leq \Omega\\
&\sum_{v \in V} x_v \leq \Lambda\\
&\alpha_v \leq 1+z_v-y_v &&\forall v \in V\label{Model::MCN::QIP::Vaccination}\\
&\alpha_v \leq \alpha_u +x_v +z_v &&\forall(u,v)\in E\label{Model::MCN::QIP::Cascade}
\end{align}
\end{subequations}
Constraint \eqref{Model::MCN::QIP::Vaccination} ensures that infected nodes cannot be saved, unless they were vaccinated and Constraint \eqref{Model::MCN::QIP::Cascade} describes the propagation of the infection to neighboring nodes that are neither vaccinated nor protected. This model corresponds exactly to the trilevel program presented in \cite{baggio2021multilevel} with the key difference, that we are able to directly plug this model into our solver to obtain the optimal solution, without having to dualize, reformulate or develop domain specific algorithms. The same is true for the QIP solver \yasol. 

We compare the performance of \yasol, \rqis (utilizing binary search), and the baseline column- and row-generation approach (\MCNBaggio) from \cite{baggio2021multilevel}. Note that \MCNBaggio is essentially a scenario generation approach that uses dualization techniques to approximate the optimal solution of the adversary problem. In contrast, the approach provided by \yasol and \rqis is much more straightforward and easy to use, requiring only a problem encoding as QIP. Notably, we do not compare our solver to the enhanced techniques from \cite{baggio2021multilevel}, as our goal is to demonstrate the model-and-run potential of QIPs as a baseline for such instances.

The \MCNBaggio algorithm was executed using Python 2.7.18, with mixed-integer linear programs solved via IBM CPLEX 12.9. Experiments were conducted on an AMD EPYC 9474F 48-Core Processor (3.60 GHz) with 256 GB RAM, imposing a two-hour time limit per instance and restricting each process to a single thread. We used the provided instances\footnote{Instances, optimal solutions, and algorithms from \cite{baggio2021multilevel} were provided at \url{https://github.com/mxmmargarida/Critical-Node-Problem}.}, consisting of randomly generated undirected graphs with $|V| \in \{20,40,60,80,100\}$, a density of 5\%, and various budget limit configurations ($\Omega$, $\Phi$, and $\Delta$).
\begin{figure}[htb]
\centering
\includegraphics[width=.65\textwidth]{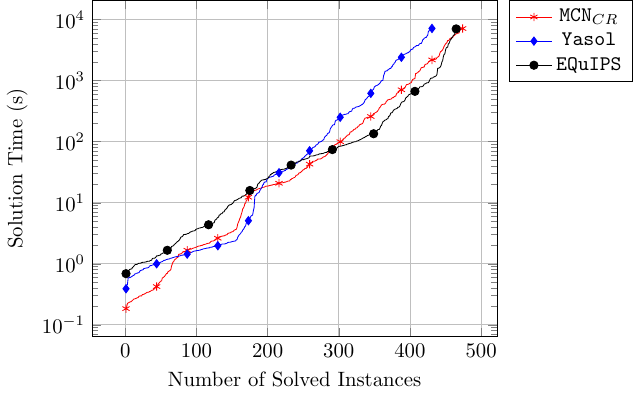}
\caption{Cactus plot for experiments on MCN test set.\label{fig::cactus_mcn}}
\end{figure}
Figure~\ref{fig::cactus_mcn} presents a Cactus plot, revealing key performance trends. Our approach exhibits higher run times on instances that are generally solved quickly. However, \rqis performs well on larger instances with inherently higher run times. Overall, our method solves the second-highest number of instances (465), compared to 431 solved by \yasol and 475 by \MCNBaggio.

\subsection{Further Experiments}
To further evaluate our expansion-based approach, we conducted additional experiments (see Appendix~\ref{app::furtherExperiments} for details). The key findings are as follows. First, using \rqis to verify the optimality of incumbent solutions found during the search process of \yasol is promising. This hybrid approach can be enhanced by incorporating selected learned constraints from \yasol into the \rqis verification instance. Second, \rqis does not always outperform \yasol ---especially on instances with multiple quantifier alternations---a result that aligns with similar observations in QBF. Finally, several heuristic improvements remain to be explored. For example, solving the IP relaxation to obtain an initial winning move shows potential but may increase run times when the full expansion is eventually required.

\section{Conclusion}

In this paper, we presented an expansion-based approach for quantified integer programming, a rarely explored direction in quantified constraint programming. Our method leverages the power of state-of-the-art integer linear programming solvers to handle abstractions---partial expansions of the quantified program.

\rqis offers advantages over existing QIP solvers like \yasol and various QBF approaches. It inherits the benefits of expansion-based solvers over search-based solvers in QBF while also utilizing the modeling flexibility of linear constraints. Additionally, we observe that QBF solvers incorporating XOR reasoning and expansion, such as \qfun with \cms, perform competitively with \rqis.

Our experiments show that \rqis offers advantages in certain cases. We demonstrate performance improvements on different families that other approaches neglect, strengthening the case for combining methods and illustrating how such combinations can be effective in practice.

Future improvements could integrate machine learning techniques to enhance abstraction construction. While \qfun employs decision trees, our integer programming setting allows for numerical machine learning methods such as support vector machines. \added{Additionally, the counterexample generation in \wins can be heuristically guided by adding an objective to the underlying integer programs, e.g., choosing universal assignments that maximize constraint violation or existential assignments that optimize the objective.} To further improve optimization capabilities, strategy extraction as well as directly incorporating optimization aspects into our framework are also promising directions. \added{Furthermore, parallelizing both the exploration of countermoves in each sub‐game and the objective‐value checks during binary search should improve performance.}


\bibliography{compl}

\appendix
\newpage
\section{Appendix}\label{sec:appendix}


\subsection{Further Experiments\label{app::furtherExperiments}}

\subsubsection{Mixing Methods} 
As outlined in Section~\ref{sec::mixing}, there is potential to link search-based and expansion-based approaches. To demonstrate this, we conducted an experiment using multilevel critical node instances. 1. Run \yasol and record the time $t_{\textnormal{opt}}$ at which the optimal solution $z^\star$ is first found (its existence is verified, not its optimality). 2. Run \rqis, separately, with the objective function constraint bound $z^\star +1$ (note the maximization objective function) and record the verification time $t_{\textnormal{ver}}$. 
This hypothetical solver---running \yasol in parallel with \rqis for  verification---with runtime $t_{\textnormal{opt}}+t_{\textnormal{ver}}$ solves 27 more instances than \yasol alone and has strictly lower runtime on 196 instances.

We also tested whether adding learned constraints from \yasol's search benefits the expansion-based solver. We modified \yasol to extract every detected conflict as a constraint until the optimal solution was found (extracting beyond this point might prune the optimal solution). For a single instance, we obtained 73 learned constraints and created three variations of the verification instance: one without added constraints, one with all 73 constraints, and one with three hand-picked constraints. Table~\ref{tab::mix_cons} reports the runtimes and the number of IP calls in the \wins function.
\begin{table}[htp]
\centering
\caption{Comparison of three verification instances.\label{tab::mix_cons}}
\begin{tabular}{llll}
\toprule
instance type& original verification instance & org. + 73 constraints & org. + 3 constraints\\\midrule
runtime & 36.3s & 44.4s & 22.2s\\
calls to IP solver & 1034 & 648 & 550 \\\bottomrule
\end{tabular}
\end{table}
For this instance, incorporating learned constraints reduced the iterations (and thus IP calls) for the expansion-based solver. However, too many constraints may increase IP solver runtime; therefore, selectively transferring the ``most beneficial'' constraints to \rqis can significantly enhance the verification process. Although we expected the 73-constraint instance to have fewer IP calls than the one with three constraints, this discrepancy likely results from testing only a single instance. Averaged over a larger set, more constraints should decrease the number of IP calls.

\subsubsection{Performance of \rqis vs. \yasol on other optimization test sets\label{app::WeVsYasol}}
While the computational experiments in Sections~\ref{sec::QRP_exp} and \ref{sec::MCN} show promising results---suggesting that our expansion-based approach can effectively compete with the search-based solver \yasol ---this cannot be stated as a general conclusion, particularly for optimization instances. We conducted tests on 1800 multistage robust assignment instances from \cite{goerigk2021multistage} and 270 multistage robust scheduling instances from \cite{hartisch2020qip}. \added{The former involve combinatorial matching problems under cost uncertainty, while the latter model aircraft scheduling with uncertain arrival times. Both datasets include instances with up to seven decision stages.} In both cases, instances can be encoded either with or without polyhedral uncertainty, labeled with \QIPPU\ and \QIP, respectively.
\begin{table}[htb]
\centering
\caption{Number of solved instances and median runtimes on different test sets.\label{tab::VsYasol}}
\footnotesize
\begin{tabular}{lccccc|ccccc}
\toprule
&\multicolumn{5}{c|}{solved instances}&
\multicolumn{5}{c}{median run times (seconds)}\\\midrule
&\multicolumn{2}{c}{Assignment}&
\multicolumn{2}{c}{Scheduling}&MCN&\multicolumn{2}{c}{Assignment}&
\multicolumn{2}{c}{Scheduling}&MCN\\
&\QIP & \QIPPU &\QIP & \QIPPU & &\QIP & \QIPPU &\QIP & \QIPPU & \\\midrule
\yasol & 1800 & 1800 & 262 & 264&431& 0.4 & 0.4 & 56.7 &29.0 & 247.8\\ 
\rqis  & 1599 & 1760 & 194 & 247&465 &10.6& 2.7 &	287.9 &	51.3&83.8\\
\bottomrule
\end{tabular}
\end{table}

Table~\ref{tab::VsYasol} presents the number of instances solved within a 1800-second time limit as well as the median runtime, highlighting that \rqis struggles with these problem types. Several factors may explain this behavior. First, for assignment instances without a universal constraint system, the structure requires existentially quantified variables to adapt to any changes in universally quantified variables. As a result, nearly the entire expansion must be constructed before a solution can be determined, drastically increasing computational complexity. Additionally, many of the tested instances feature multiple levels of universally quantified variables, unlike the MCN instances, which contain only a single universal level.

This aligns with expansion-based solvers in the QBF domain, which perform well with few quantifier alternations but struggle as the number of universal levels increase.

\subsubsection{Adapted \wins Function in Case of Empty Abstraction}
Several aspects of Algorithm~\ref{alg:br} allow for a choice between standard techniques and more sophisticated implementations. One such aspect is selecting a winning move for the empty abstraction, occurring at Line~\ref{line::find_tau}. The call to \rqis leads to \wins, which, for $\Q=\exists$, returns any assignment satisfying the domain constraints. In our implementation, we initially returned the lower bounds of existentially quantified variables, avoiding the IP solver. While computationally inexpensive, this can result in an assignment that violates the existential constraints, making the first countermove less meaningful.

An obvious alternative is to use the IP relaxation, where we solve the existential constraint system without considering quantification. The goal is to obtain stronger moves that lead to more relevant countermoves, ultimately reducing the number of subgames to be considered.  Note, that as we observed superior runtimes when using this existential IP relaxation, all reported results so far, are based on this implementation.

However, solving an IP is more computationally expensive than assigning lower bounds. We investigated this trade-off using the same test sets as in Appendix~\ref{app::WeVsYasol}. Results in Table~\ref{tab::IP} show that fewer instances of the assignment test set are solved within the 1800-second time limit with the existential IP relaxation. However, for other test sets, \rqis performance improves, with median runtimes decreasing overall. 

\begin{table}[htb]
\centering
\caption{Number of solved instances and median runtimes, with \rqis using lower bounds (LB) vs. solving the existential IP relaxation (exist. IP) to find a winning move for the empty abstraction.\label{tab::IP}}
\footnotesize
\begin{tabular}{lccccc|ccccc}
\toprule
&\multicolumn{5}{c|}{solved instances}&
\multicolumn{5}{c}{median run times (seconds)}\\\midrule
&\multicolumn{2}{c}{Assignment}&
\multicolumn{2}{c}{Scheduling}&MCN&\multicolumn{2}{c}{Assignment}&
\multicolumn{2}{c}{Scheduling}&MCN\\
&\QIP & \QIPPU &\QIP & \QIPPU & &\QIP & \QIPPU &\QIP & \QIPPU & \\\midrule
LB & 1600 & 1770 & 175 & 223&461 &12.3& 3.8 &	590.9 &	127.8&121.9\\
exist. IP  & 1599 & 1760 & 194 & 247&465 &10.6& 2.7 &	287.9 &	51.3&83.8\\
\bottomrule
\end{tabular}
\end{table}
\end{document}